\newtheorem{theorem}{Theorem}[section]
\newtheorem{proposition}[theorem]{Proposition}
\newtheorem{lemma}[theorem]{Lemma}
\theoremstyle{definition}
\renewcommand{\geq}{\geqslant}
\renewcommand{\leq}{\leqslant}
\def\Z{\mathbb{Z}}
\def\R{\mathbb{R}}
\def\calF{\mathcal{F}}
\def\calS{\mathcal{S}}
\def\calI{\mathcal{I}}
\def\llambda{\boldsymbol{\lambda}}
\def\un{\boldsymbol{1}}
\title{Balanced assignments of periodic tasks}
\author{Héloïse Gachet}
\address{H. Gachet, CERMICS, ENPC, Institut Polytechnique de Paris, Marne-la-Vallée \& SNCF, DTPIG, France}
\email{heloise.gachet@enpc.fr}
\author{Frédéric Meunier}
\address{F. Meunier, CERMICS, ENPC, Institut Polytechnique de Paris, Marne-la-Vallée, France}
\email{frederic.meunier@enpc.fr}
\thanks{A preliminary version of this paper appeared in the Proceedings of the 24th Symposium on Algorithmic Approaches for Transportation Modelling, Optimization, and Systems (ATMOS 2024).}
\begin{document}

\begin{abstract}
This work addresses the problem of assigning periodic tasks to workers in a {\em balanced} way, i.e., so that each worker performs every task with the same frequency over the long term. The input consists of a list of tasks to be repeated weekly at fixed times and a number of indistinguishable workers. In the basic version, the sole constraint is that no worker performs two tasks simultaneously. In the extended version, additional constraints can be introduced, such as limits on the total number of working hours per week.

Regarding the basic version, a necessary and sufficient condition for the existence of a balanced assignment is established. This condition can be verified in polynomial time. For the extended version, it is demonstrated that whenever a balanced assignment exists, a {\em periodic} balanced assignment exists as well, with a tighter bound on the period for the basic version.
\end{abstract}

\keywords{Fair schedule, periodic schedule, Eulerian digraph, Markov chain, interval graph}


\maketitle

\section{Introduction}
\subsection{Context}
Assigning periodic tasks to workers gives rise to problems widely studied in the Operations Research literature. Korst, Aarts, Lenstra, and Wessels~\cite{korst94} consider for instance the problem of assigning periodic operations to processors, while minimizing the total number of such processors. 
As another example, Orlin~\cite{orlin82} studies the problem of minimizing the total number of airplanes while meeting a fixed periodically repeating set of tasks. 
The question of the periodicity of such assignments is also often addressed~\cite{PespAtmos2024}.



In this article, we consider tasks that need to be repeated every week at specific times and a number of indistinguishable workers to perform them. We address the problem of assigning each occurrence of these tasks to the workers in a {\em balanced} way, i.e., so that each worker performs each task with the same frequency over a long period. Two versions of this problem are considered: the basic version, for which the only constraint is that a worker cannot perform two tasks at the same time, and the extended version where additional constraints on the feasibility of weekly schedules are considered. As an illustration, the latter version covers the case of a limit on the total number of working hours per week. 

Based on the authors' experience, this concern of balancedness between workers is present in multiple industries, especially in the transportation sector. For example, at the SNCF, the main French railway company, the schedules of the freight train drivers must be of a certain form, called ``cyclic rosters'' in the literature \cite{breugem20, xie15}, that automatically ensures such balancedness while helping to maintain a similar level of proficiency among the workers. However, we have not been able to find any paper addressing the question of achieving balanced assignment of tasks in the long term as we do in this article.

Our work shows that, for the two versions of the studied problem, whenever there exists a balanced assignment, there exists one that is periodic. We prove that, for the basic version, the period can be made equal to the number of workers. Furthermore, still for the basic version, we show that the existence of such an assignment is polynomial time decidable. As a tool to establish our results, we introduce a problem of pebbles moving along arcs on an arc-colored directed graph and that need to visit each arc with the same frequency in the long term. This latter problem and the results obtained about it might be of independent interest.


\subsection{Problem formulation and first results}\label{sec:intro}

Consider a collection of tasks that have to be performed periodically (typically every week). We assume that each task has a given starting time and a given ending time over the period and that its length is at most the period. These tasks have to be performed by a group of indistinguishable workers. Formally, we are given an instance $\calI$ with
\begin{itemize}
    \item a collection of $n$ sub-intervals $[s_i,e_i)$ of the open interval $(-1,1)$, with $e_i \in (0,1]$ and $e_i - s_i \leq 1$.
    \item a positive integer $q$.
\end{itemize}
Each interval $[s_i,e_i)$ represents a task: the $r$th occurrence of task $i$ ($r\in \Z_{>0}$) takes place over the time interval $[s_i+r,e_i+r)$. The number $q$ corresponds to the number of workers, whom we identify from now on with the set $[q]$. 
The interval $(r,r+1]$ will be referred to as {\em week} $r$. The tasks that are overlapping the left endpoint of the interval $[0,1]$ will play an important role in the proofs and, for later purpose, we denote by $U(\calI)$ the set of such tasks. Formally, $U(\calI) = \{i\in [n] \colon s_i \leq 0\}$.

Every occurrence of each task has to be assigned to a worker. Such an assignment is {\em feasible} if it satisfies the {\em non-overlapping property}:  no worker is assigned two occurrences overlapping within $\R_{>0}$. Such an assignment is {\em balanced} if each task is performed by each worker every $q$ weeks in the long term average.
The notion of feasibility of an assignment will be revised in Section~\ref{sec:extension}, where the extended version will be defined. Until then, we refer to the basic version of the problem, where feasibility deals only with the non-overlapping property. 

In symbols, consider a function $f \colon [n] \times \Z_{>0} \to [q]$, where $f(i,r) = j$ means that the $r$th occurrence of task $i$ is assigned to worker $j$. The non-overlapping property writes then:
\begin{equation}\label{eq:feasible}
[s_i+r,e_i+r) \cap [s_{i'}+r',e_{i'}+r') \neq \varnothing \quad \Longrightarrow \quad f(i,r) \neq f(i',r')
\end{equation}
for all $i \neq i'$ and all $r,r'$. (Remark that the left-hand side holds only if $|r-r'|\leq 1$ and that two occurrences of the same task never overlap.)
It is balanced if
\begin{equation}\label{eq:limit}
\lim_{t\to +\infty} \frac 1 t \big|\{r \in [t] \colon f(i,r) = j\}\big| = \frac 1 q 
\end{equation}
for all $i \in [n]$ and all $j \in [q]$. An illustration is given in Figures~\ref{fig:unachievable} and~\ref{fig:ex}, the first figure providing an example with a feasible assignment but no balanced feasible assignment. An assignment $f$ is {\em periodic} if there exists $h \in \Z_{>0}$  such that $f(i,r) = f(i,r+h)$ for all $i\in [n]$ and $r\in\Z_{>0}$.

\begin{figure}
    \begin{subfigure}{\textwidth}
    \centering
    \includegraphics[width = 0.7\textwidth]{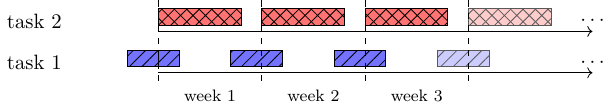}
    \caption{The two tasks of $\calI$. The first four occurrences of each task are represented. The fourth occurrence of each task is represented in lighter color. Here, $U(\calI)=\{1,2\}$. }
    \end{subfigure}

    \vspace{2mm}
    \begin{subfigure}{\textwidth}
    \centering
    \includegraphics[width = 0.7\textwidth]{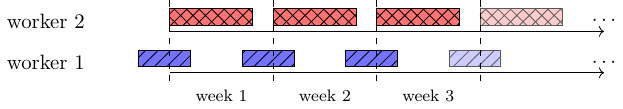}
    \caption{A feasible assignment $f$ for $\calI$. Assuming that this pattern is repeated along the horizontal axis, each line represents a worker and the $r$th occurrence of task $i$ is on the line of worker $j$ when $f(i,r)=j$. }
    \end{subfigure}
    \caption{Example of an instance $\calI$ with two tasks ($n=2$) and two workers ($q=2$), with a feasible assignment. Up to exchanging workers $1$ and $2$, this assignment is unique and it is not balanced: without loss of generality, the first occurrence of the hatched blue task is assigned to worker $1$, and this determines completely the assignment of all other occurrences of tasks $1$ and $2$.
    }
    \label{fig:unachievable}
\end{figure}

\begin{figure}
    \centering
    \begin{subfigure}{\textwidth}
    \centering
    \includegraphics[width = 0.7\textwidth]{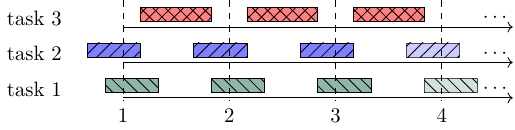}
    \caption{The three tasks of $\calI$. The first three occurrences of each task are represented. The fourth occurrence of tasks 1 and 2 is represented in lighter color. Here, $U(\calI)=\{1,2\}$.}
    \label{fig:instance}
    \end{subfigure}
    
    \vspace{2mm}
    \begin{subfigure}{\textwidth}
    \centering
    \includegraphics[width = 0.7\textwidth]{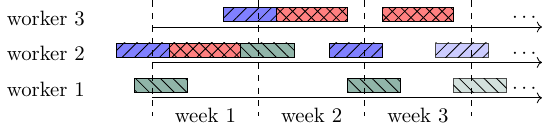}
    \caption{A feasible assignment $f$ for $\calI$. Assuming that this pattern is repeated along the horizontal axis, the assignment $f$ is not balanced: worker 3 does not perform task 1 and worker 1 performs 75\% of the occurrences of task 1.}
    \label{fig:feasible}
    \end{subfigure}
    
    \vspace{2mm}
    \begin{subfigure}{\textwidth}
    \centering
    \includegraphics[width = 0.7\textwidth]{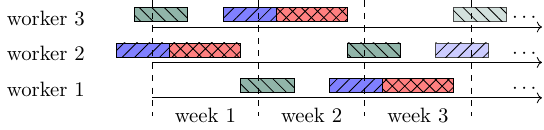}
    \caption{\label{fig:balanced}A feasible assignment $g$ for $\calI$. Assuming that this pattern is repeated along the horizontal axis, the assignment $g$ is feasible and balanced: task 1 is performed equally by workers 1, 2, and 3, and so are tasks 2 and 3. The assignment $g$ is periodic with period $h=3$.}
    
    \end{subfigure}
    \caption{\label{fig:ex} Example of an instance $\calI$ with three tasks ($n=3$) and three workers ($q=3$), with two feasible assignments.}
\end{figure}

We aim at identifying conditions under which there exists a balanced feasible assignment and at studying the related algorithmic questions.

A few comments are in order. First, note that there exists a feasible assignment if and only if there is no point in $\R$ contained in more than $q$ intervals $[s_i+r,e_i+r)$. This is an elementary result from the coloring of interval graphs. (We actually state it as a proper lemma---Lemma~\ref{lem:interval_coloring}---for sake of completeness.) This means that for our problem, feasibility is not the challenge. Second, there are assignments for which the limit in \eqref{eq:limit} is not well-defined. By definition, if the limit is not well-defined, then the assignment is not balanced.
 
Clearly, a necessary condition for the existence of a balanced feasible assignment is that there is a feasible assignment in which a worker performs each task at least once. Surprisingly this condition is actually sufficient, as stated by our first theorem. 

\begin{theorem}\label{thm:cns}
    The following statements are equivalent:
    \begin{enumerate}[label=\textup{(\alph*)}]
        \item There exists a feasible assignment with a worker performing each task at least once. \label{cond1}
        \item There exists a balanced feasible assignment. \label{cond2}
        \item There exists a periodic balanced feasible assignment with period $q$. \label{cond3}
    \end{enumerate}
\end{theorem}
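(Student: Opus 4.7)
The implication $(c) \Rightarrow (b)$ is immediate from the definitions. For $(b) \Rightarrow (a)$, I would note that if $f$ is a balanced feasible assignment then the limit in~\eqref{eq:limit} equals $1/q > 0$ for every task $i$ and every worker $j$, so each worker performs each task in infinitely many weeks; in particular some worker performs each task at least once, which is~(a).

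The bulk of the work is $(a) \Rightarrow (c)$. The plan is to work inside a finite directed graph $G$ whose vertices are the feasible \emph{boundary states}---the injective maps $u \colon U(\calI) \to [q]$ specifying, at a week boundary, which worker performs each currently active task of $U(\calI)$---and whose arcs are the feasible \emph{weekly schedules}: an arc $W \colon u \to v$ records a non-overlapping worker assignment for every occurrence inside one week that is consistent with $u$ and $v$ at the two boundaries of that week. With this setup, feasible assignments correspond exactly to infinite walks in $G$ and periodic ones to closed walks. The symmetric group $S_q$ acts on $G$ by relabeling workers, sending an arc $W \colon u \to v$ to $\sigma W \colon \sigma \circ u \to \sigma \circ v$.

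The target is a closed walk of length exactly $q$ in $G$ along which each task is performed exactly once by each worker, which yields condition~(c). A particularly clean way to produce such a walk is to fix a $q$-cycle $\sigma \in S_q$, find a single arc $W \colon u \to v$ of $G$ with $v = \sigma \circ u$, and then concatenate
\[
W,\ \sigma W,\ \sigma^2 W,\ \ldots,\ \sigma^{q-1} W.
\]
Consecutive endpoints match by construction, the walk closes because $\sigma^q = \mathrm{id}$, and in week $r$ task $i$ is performed by $\sigma^r(W(i))$; as $r$ runs over $\{0,\ldots,q-1\}$ this value cycles through all of $[q]$, delivering both period $q$ and balance. The condition on $W$ reduces to requiring that its \emph{partial monodromy} $u(i)\mapsto v(i)$, $i\in U(\calI)$, extend to a $q$-cycle of $[q]$, equivalently that the associated functional digraph on $[q]$ contain no directed cycle.

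The main obstacle I anticipate is showing that hypothesis~(a) guarantees the existence of such an arc (or, failing that, of some alternative closed walk of length $q$ in $G$ whose combined monodromy distributes workers evenly across tasks). I would try a contradiction argument: if no arc of $G$ had an acyclic partial monodromy, then in every feasible weekly schedule a subset of workers would be permanently trapped permuting a fixed subset of $U(\calI)$-tasks among themselves; propagating this rigidity along any infinite walk in $G$ would prevent any single worker from successively performing all tasks, contradicting~(a). Turning this intuition into a formal proof seems to require the auxiliary pebbles-on-arc-colored-digraph framework flagged in the introduction, applied to the quotient $G / S_q$ (which collapses to a single vertex because $S_q$ acts transitively on injections $U(\calI) \to [q]$), to obtain the needed long-term control on trajectories.
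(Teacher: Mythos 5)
Your reduction of \ref{cond1}$\Rightarrow$\ref{cond3} to the existence of a single feasible week whose boundary monodromy extends to a $q$-cycle is sound, and the iteration $W,\sigma W,\ldots,\sigma^{q-1}W$ does deliver feasibility, period $q$, and balance exactly as you describe; this mirrors the paper's strategy, which (after padding the instance with fictitious tasks so that $|U(\calI')|=q$) seeks a feasible assignment $g$ whose induced permutation of $U(\calI')$ is a $q$-cycle, i.e.\ a Hamiltonian cycle in the digraph $D^{\calI',\overline\calF_{\calI'}}$, and then repeats $g$ conjugated by the permutations $\pi_r$ (Lemma~\ref{lem:balanced_q}). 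The genuine gap is that you have not proved the existence of such an arc $W$, and this is where essentially all of the difficulty lies. Your contradiction sketch---``if no arc had acyclic partial monodromy then a subset of workers would be permanently trapped''---does not follow: the negation only says that \emph{each individual} weekly schedule has a directed cycle in its monodromy, and these cycles may differ from week to week, so no fixed set of workers need be trapped along an infinite walk and hypothesis \ref{cond1} is not contradicted so easily. The paper closes this gap with a dedicated mechanism: a ``merge operation'' that exploits a common idle time of two workers lying on distinct monodromy cycles to splice those cycles together, decreasing the cycle count by one (Lemma~\ref{lem:variant}), together with a delicate inductive argument (Lemma~\ref{lem:big-cycle}, via the injective map $\chi$ over tasks sorted by starting time) showing that an assignment admitting no merge must already induce a single Hamiltonian cycle whenever the digraph is Eulerian---Eulerianness being exactly what \ref{cond1} provides via Lemma~\ref{lem:connected_unconstrained}. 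None of this is routine; the authors themselves remark that they could not make the intuitive ``common idle time'' scheme fully explicit.

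Your proposed fallback also does not rescue the statement as claimed: the pebbles-on-an-arc-colored-digraph framework (Proposition~\ref{prop:graph}) yields a periodic balanced assignment of period bounded only by $|A|(|V|-1)!$, which in this application becomes $q^2q!$ (Theorem~\ref{thm:periodic}), not period $q$; and passing to the quotient of your state graph by $S_q$ discards precisely the monodromy data you need, since collapsing to a single vertex leaves no record of which permutation of $[q]$, if any, a given weekly schedule realizes. So the period-$q$ conclusion of \ref{cond3} really does require the merge-operation argument (or an equivalent), and your proposal is missing it.
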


Note that the implications $\ref{cond2}\Rightarrow\ref{cond1}$ (discussed above) and $\ref{cond3}\Rightarrow\ref{cond2}$ are immediate. The proof focuses on showing $\ref{cond1}\Rightarrow\ref{cond3}$.

Statement~\ref{cond1} of Theorem~\ref{thm:cns} is simple enough to obtain an algorithmic counterpart. 

\begin{theorem}\label{thm:algo}
    Deciding whether there exists a balanced feasible assignment can be done in polynomial time. Moreover, when such an assignment exists, a periodic one with period $q$ can be computed in polynomial time as well.
\end{theorem}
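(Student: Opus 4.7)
The strategy is to leverage Theorem~\ref{thm:cns} and design a polynomial-time algorithm for deciding condition~\ref{cond1}. Since Theorem~\ref{thm:cns} gives the equivalence of \ref{cond1}, \ref{cond2}, \ref{cond3}, deciding \ref{cond1} also decides the existence of a balanced feasible assignment. The proof of \ref{cond1} $\Rightarrow$ \ref{cond3} will be constructive, producing a periodic balanced assignment of period $q$ from a witness of \ref{cond1} by manipulating combinatorial structures (such as an Eulerian digraph) of size polynomial in $n$ and $q$. It should therefore run in polynomial time, yielding the second part of the statement as soon as we can efficiently check \ref{cond1} and extract a witness.

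First, compute the maximum overlap $\omega$ of the $n$ periodic intervals by a sweep over the $2n$ endpoints in one week; if $\omega > q$, no feasible assignment exists and the answer is \textsc{no}. Otherwise, condition \ref{cond1} is equivalent to the existence of a set $A$ of task occurrences that is (i) pairwise non-overlapping, (ii) covers every task, and (iii) hits every peak point, i.e., every $t \in \R$ where the overlap equals $q$. Property (iii) ensures that the set of occurrences not in $A$ has clique number at most $q - 1$, hence admits a feasible $(q - 1)$-coloring since interval graphs are perfect (Lemma~\ref{lem:interval_coloring}); combining $A$ (as worker $1$'s schedule) with this coloring yields a feasible assignment in which worker $1$ performs every task.

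By periodicity, it suffices to search for $A$ among occurrences in a window of $q$ consecutive weeks, with cyclic identification of the boundary, reducing the problem to polynomial size. The three conditions all have interval structure: (i) is an antichain constraint in the interval graph, (ii) is a task-covering constraint, and (iii) is a hitting-set constraint over the family of peak intervals, itself a family of intervals. Formulated as a linear program, the constraint matrix should exhibit a consecutive-ones structure, ensuring total unimodularity and hence integral optima; alternatively, one can cast the problem as a polynomial-size network flow. The main obstacle is carrying out this reduction explicitly and verifying polynomial solvability---the interval structure will be the crucial ingredient. Once $A$ is obtained in polynomial time, applying the construction from the proof of \ref{cond1} $\Rightarrow$ \ref{cond3} in Theorem~\ref{thm:cns} yields the desired periodic balanced assignment of period $q$.
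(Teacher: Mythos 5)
Your high-level plan (decide condition~\ref{cond1}, then run the constructive proof of \ref{cond1}$\Rightarrow$\ref{cond3}) is reasonable, but the step that carries all the difficulty is left unproved, and it is precisely where the paper does something quite different. You reduce \ref{cond1} to finding a set $A$ of occurrences that is an independent set of the interval graph, contains an occurrence of every task, and hits every point of overlap $q$; that reduction is fine (perfection of interval graphs gives the $(q-1)$-coloring of the complement). But you then assert this search can be done via an LP with a consecutive-ones constraint matrix, or a network flow. The packing constraints (at most one chosen occurrence per time point) are indeed interval constraints, but the covering constraints ``$A$ contains some occurrence of task $i$'' range over $(i,1),(i,2),\ldots$, which are scattered one per week and are not consecutive in any ordering compatible with the packing rows; the combined matrix has no reason to be totally unimodular, and the resulting colorful-independent-set-with-covering problem has no generic polynomial method. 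This is exactly the obstacle you flag yourself, and it is not a detail: the paper never searches for such a witness $A$. Instead it starts from any feasible first week (Lemma~\ref{lem:interval_coloring}), performs at most $q$ merge operations (Lemmas~\ref{lem:variant} and~\ref{lem:merge_polynomial}), and invokes Lemma~\ref{lem:big-cycle} to show that the terminal assignment either labels a Hamiltonian cycle of $D^{\calI',\overline\calF_{\calI'}}$---from which the construction of Lemma~\ref{lem:balanced_q} yields the period-$q$ balanced assignment---or certifies, via Lemma~\ref{lem:connected_balanced}, that no balanced feasible assignment exists.

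A second gap is your window of $q$ consecutive weeks. Besides being unjustified (a worker may need on the order of $n$ weeks, not $q$, to touch every task, and it is not argued that a witness can be folded cyclically into $q$ weeks), such a window has size $\Theta(q)$ while $q$ is given in binary; the paper explicitly forbids outputs, and hence intermediate objects, of size $\Omega(q)$. This is why the paper treats $q\geq 2n$ separately with the closed formula of Lemma~\ref{lem:polynomial_bigq} and only runs the combinatorial algorithm when $q\leq 2n-1$. Your proposal has no mechanism for the large-$q$ regime.
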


Note that, for the computation of the balanced feasible assignment, we are not allowed to have an output of size $\Omega(q)$ because it is the number of workers that is given in input, and not the list of them. 

A consequence of this result is that the existence of a balanced feasible assignment is decidable, i.e., there is an algorithm that decides in finite time whether a balanced feasible assignment exists for any input. It is not clear that there is a simpler proof of this fact.

When there is a point of $[0,1)$ contained in no interval $[s_i+r,e_i+r)$ with $r\in\{ 0,1\}$, then the theorems are trivial: without loss of generality, this point is $0$, and any feasible assignment $f$ and any cyclic permutation $\pi$ of $[q]$ (without fixed points) provide a balanced feasible assignment $g$, periodic with period $q$, defined by $g(i,r) \coloneqq (\pi^r\circ f)(i,1)$ for $i\in [n]$ and $r\in \Z_{>0}$ (where $\pi^r = \underbrace{\pi\circ \cdots\circ\pi}_{r\text{ times}}$).

\subsection{Extended version: valid weekly schedules}\label{sec:extension}
We introduce now an extended version of the previous problem, relying on the notion of ``schedules.'' 

An assignment \emph{induces} \emph{schedules} in the following manner: for each worker and each week $r$, a schedule is defined by the set of tasks performed over the week $r$, possibly with a task that has started over week $r-1$ and possibly with a task that will finish over week $r+1$.
Formally, a \emph{schedule} is a pair $(T,W)$ with $T \subseteq [n]$ and $W = \varnothing$ or $W=\{i\}$ with $i\in U(\calI)$.
The sets $T$ and $W$ can respectively be interpreted as the set of tasks finishing over week $r$ and as a potential task starting over week $r$ and finishing on week $r+1$.
A schedule is \emph{non-overlapping} if for all $i' \in T$ we have
\begin{equation}\label{eq:nonover-sch}
[s_{i'}, e_{i'})\cap[s_{i''}, e_{i''}) = \varnothing \text{ for all $i''\in T\setminus\{i'\}$}\  \text{ and }\  [s_{i'}, e_{i'})\cap[s_i+1, e_i+1) = \varnothing \text{ for all $i\in W$}\, .
\end{equation}

We extend the previous problem by constraining further the notion of feasible assignment. An instance $\calI$ comes now with a set $\calS$ of non-overlapping schedules.
 Formally, an instance $\calI$ is a triple of the form $\left(\bigl([s_i,e_i)\bigl)_{i\in[n]}, q, \calS\right)$ where every schedule in $\calS$ is non-overlapping. An assignment is {\em feasible} if all the schedules it induces belong to $\calS$. As an illustration of the relevance of this extended version, it is straightforward to impose constraints such that
 \begin{itemize}
     \item no worker works more than $H$ hours per week.
     \item no worker performs more than $M$ tasks per week.
     \item no worker starts two tasks indexed with a prime number in the same week.   
 \end{itemize}

Remark that the basic version is actually the special case of the extended version when $\calS$ is formed by all possible non-overlapping schedules. (The non-overlapping property is the only constraint to take into account in the basic version.)

An implication similar as $\ref{cond2}\Rightarrow\ref{cond3}$ in Theorem~\ref{thm:cns} holds for the extended version.

\begin{theorem}\label{thm:periodic}
    Consider an instance $\calI$ with an arbitrary set $\calS$ of non-overlapping schedules.
    If there exists a balanced feasible assignment, then there exists one that is periodic. Moreover, in such a case, the periodic assignment can be chosen so that the period is at most $q^2 q!$.
\end{theorem}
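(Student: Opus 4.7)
I would encode feasible assignments as walks in a finite directed graph $G=(V,A)$, where $V$ is the set of injective maps $\sigma\colon U(\calI) \to [q]$ (recording, at a week boundary, which worker performs each task of $U(\calI)$ that straddles it); in particular $|V| \leq q!$. An arc from $\sigma$ to $\sigma'$ corresponds to a weekly assignment (of the tasks in $U(\calI)$ entering and leaving the week together with those in $[n] \setminus U(\calI)$) whose induced schedules all belong to $\calS$. Every feasible assignment $f$ is then an infinite walk $\sigma_1 \to \sigma_2 \to \cdots$ in $G$ with $\sigma_r(i)=f(i,r)$ for $i \in U(\calI)$, and $f$ is periodic iff the walk is.

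\textbf{Rotation construction.} Workers being indistinguishable, the symmetric group $S_q$ acts on $V$ by $\pi \cdot \sigma = \pi \circ \sigma$ and on arcs by relabelling the workers in the associated weekly assignment; this action preserves $A$. Suppose I can find a walk $W$ of length $h$ in $G$ from some $\sigma^* \in V$ to $\pi \circ \sigma^*$, where $\pi \in S_q$ is a $q$-cycle; let $\pi^k W$ denote the walk of $G$ obtained by applying $\pi^k$ to every vertex and every arc label of $W$. Then
\[
C = W \cdot \pi W \cdot \pi^2 W \cdots \pi^{q-1} W
\]
is a cycle in $G$ of length $qh$, because the end of $\pi^k W$ is $\pi^{k+1} \circ \sigma^*$, which is the start of $\pi^{k+1} W$, and $\pi^q \circ \sigma^* = \sigma^*$. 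Writing $N(j')$ for the number of weeks worker $j'$ performs a given task $i$ along $W$, the total along $C$ equals $\sum_{k=0}^{q-1} N(\pi^{-k}(j)) = \sum_{j' \in [q]} N(j') = h$ for every worker $j$, since the $\pi$-orbit of $j$ is all of $[q]$. Each worker thus performs each task exactly $h = qh/q$ times per period, i.e., the cycle is balanced.

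\textbf{Extraction and length bound.} Starting from a balanced feasible $f$, I would pick $\sigma^*$ among the states visited infinitely often by its walk and argue, using the balance of $f$ together with the $S_q$-symmetry of $G$, that some $q$-cycle $\pi$ makes $\pi \circ \sigma^*$ also recurrent and reachable from $\sigma^*$ in $G$. A simple walk then has $h \leq |V|-1 < q!$, and the construction above yields a periodic balanced feasible assignment of period at most $q \cdot q!$. The extra factor $q$ in the stated bound $q^2 q!$ leaves room for the pathological case where no single $q$-cycle suffices: one then replaces $\pi$ by a word of length at most $q$ in a transitive subgroup of $S_q$ whose successive translates of $\sigma^*$ are all recurrent, concatenating at most $q$ sub-walks each of length at most $q!$.

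\textbf{Main obstacle.} The crux is the existence of the $q$-cycle $\pi$ (or the transitive word in the pathological case). This is where balance of $f$ is essentially used: balance forces the long-run frequency of the event $\sigma_r(i) = j$ on the recurrent set $V^* \subseteq V$ to equal $1/q$ for every $(i,j) \in [n] \times [q]$, and this structural constraint on $V^*$, combined with the $S_q$-action, is what guarantees the required permutation exists inside a single strongly connected component of $G$.
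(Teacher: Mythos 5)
Your reduction to a finite boundary-state graph with an $S_q$-action, and the rotation argument showing that concatenating $W,\pi W,\ldots,\pi^{q-1}W$ for a $q$-cycle $\pi$ yields a balanced periodic assignment, are sound and are close in spirit to the paper's encoding ($D^{\calI,\calF}$ and Lemma~\ref{lem:seq}). But the step you yourself label ``Main obstacle'' is not a detail to be filled in later: it is the entire mathematical content of the theorem, and as stated it is both unproven and, in its primary form, false in general. What balancedness of $f$ gives you is that the set $H=\{\pi\in S_q \colon \pi\circ\sigma^\star \text{ is reachable from } \sigma^\star \text{ in the recurrent class}\}$ is a subgroup acting transitively on $[q]$ (each worker must eventually carry each straddling task). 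Transitivity, however, does not yield a $q$-cycle in $H$: for $q=4$ the Klein four-group acting regularly is transitive and contains no $4$-cycle, and nothing prevents a restrictive schedule set $\calS$ from realizing such an $H$. So the ``pathological case'' is not pathological, and your fallback is where the proof would have to live. There the argument breaks down: your balancing computation relies on translating \emph{one and the same} base walk $W$ by the full orbit $\{\pi^k\}_{k}$, so that worker $j$'s count telescopes to $\sum_{j'}N(j')$; if instead you concatenate sub-walks indexed by a word whose partial products merely lie in a transitive subgroup, the per-worker counts involve different base walks and different group elements and do not equidistribute for all tasks simultaneously. (With a single base walk, matching endpoints forces the translates to be powers of one $\pi$, and a transitive cyclic group is generated by a $q$-cycle --- you are back to the case that may not exist.) Consequently neither the existence of a balanced periodic cycle nor the $q^2q!$ bound is established.

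For comparison, the paper avoids having to exhibit any distinguished permutation. It passes (after adding fictitious tasks so that $|U(\calI')|=q$) to a universal family $\calF$ of size at most $q^2$, notes that balancedness forces $D^{\calI',\calF}$ to be Eulerian (Lemma~\ref{lem:connected_balanced}), and then lifts to the graph $\widetilde D$ on all bijections of pebbles to vertices --- the analogue of your $G$. Instead of hunting for a reachable translate by a $q$-cycle, it takes an Eulerian circuit of a single connected component of $\widetilde D$ and proves a counting lemma (Lemma~\ref{lem:factorial}): every component contains exactly $(|V|-1)!/\kappa$ arcs projecting, for each pebble $j$, onto each arc $a$ of the base graph. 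Balance then falls out of this symmetry of the component itself, with period at most $|A^{\calF}|(q-1)! \leq q^2 q!$ (Proposition~\ref{prop:graph} and Lemma~\ref{lem:eulerian_periodic}). If you want to salvage your route, the natural fix is essentially to average over the whole subgroup $H$ (or, equivalently, over an Eulerian circuit of the component), i.e., to prove a counting statement of the type of Lemma~\ref{lem:factorial} --- which is precisely the missing piece.
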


As for Theorem~\ref{thm:algo}, a consequence of this result is that the existence of a balanced feasible assignment for the extended version is decidable.

However, some properties established for the basic version do not hold for the extended version in generality. First, feasibility is not equivalent anymore to having no point in $\R$ contained in $q$ tasks simultaneously. Second, according to Theorem~\ref{thm:cns}, in the basic version, as soon as there exists a feasible assignment with a worker performing each task at least once, there exists a balanced feasible assignment. Such statement does not hold in full generality for the extended version, as illustrated in Figure~\ref{fig:counter-ex}.

\begin{figure}
    \centering
    \begin{subfigure}{\textwidth}
    \centering
    \includegraphics[width = 0.5\textwidth]{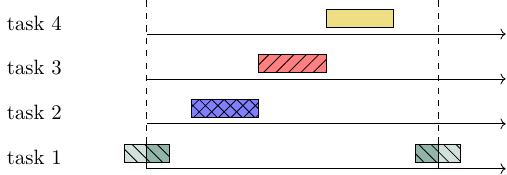}
    \caption{The four tasks of $\calI$. The first occurrence of tasks 2, 3, and 4 and the first two occurrences of task 1 are represented. Task 1 lasts 12 hours while tasks 2, 3, and 4 last 18 hours.}
    \end{subfigure}
    
    \vspace{5mm}
    \begin{subfigure}{\textwidth}
    \centering
    \includegraphics[width = \textwidth]{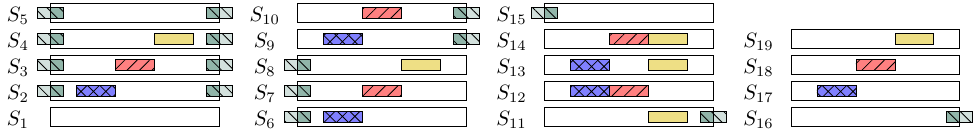}
    \caption{The set $\calS = \{S_k \colon k\in [19]\}$ gathering all non-overlapping schedules with maximum weekly work duration of 36 hours.}
    \end{subfigure}
    
    \vspace{5mm}
    \begin{subfigure}{\textwidth}
    \centering
    \includegraphics[width = \textwidth]{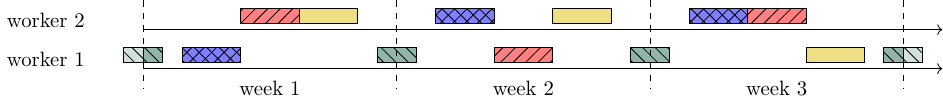}
    \caption{A feasible assignment for $\calI$. This assignment is such that worker 1 performs each task at least once; however no balanced feasible assignment exists: an easy case-checking shows that the worker starting with the first occurrence of task $1$ performs all the occurrences of this task.}
    \end{subfigure}
    \caption{An instance $\calI$ with four tasks ($n=4$), a set of non-overlapping schedules $\calS$, and two workers ($q=2$) showing that the implication $\ref{cond1}\Rightarrow \ref{cond2}$ given in Theorem~\ref{thm:cns} does not hold for the extended version.}
    \label{fig:counter-ex}
\end{figure}

This being said, under an extra condition, we keep the results of Theorem~\ref{thm:cns} for the extended version.

\begin{theorem}\label{thm:extension_cns}
    Consider an instance $\calI$ with an arbitrary set $\calS$ of non-overlapping schedules, verifying $|U(\calI)|=q$, i.e., there are $q$ indices $i$ such that $0 \in [s_i,e_i)$. 
    The following statements are equivalent:
    \begin{enumerate}[label=\textup{(\alph*')}]
        \item There exists a feasible assignment with a worker performing each task at least once. \label{ext_cond1}
        \item There exists a balanced feasible assignment. \label{ext_cond2}
        \item There exists a periodic balanced feasible assignment with period bounded by $q^2q!$. \label{ext_cond3}
    \end{enumerate}
\end{theorem}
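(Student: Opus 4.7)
The implications $\ref{ext_cond3}\Rightarrow\ref{ext_cond2}\Rightarrow\ref{ext_cond1}$ are immediate (a periodic balanced assignment is balanced; a balanced assignment has every worker performing every task with strictly positive asymptotic frequency, hence at least once), and the implication $\ref{ext_cond2}\Rightarrow\ref{ext_cond3}$ is precisely Theorem~\ref{thm:periodic}. Hence the entire content of the statement lies in $\ref{ext_cond1}\Rightarrow\ref{ext_cond2}$, which I address below.

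The hypothesis $|U(\calI)|=q$ is what makes the problem rigid: at every integer time $r$, the $q$ workers must be performing the $q$ distinct tasks of $U(\calI)$ simultaneously, so the ``state'' at time $r$ is a bijection $\sigma_r\colon[q]\to U(\calI)$. I let $S$ be the $q!$-element set of such bijections and let $G$ be the finite directed multigraph on $S$ whose arcs $\sigma\to\sigma'$ are the tuples $(T_j,W_j)_{j\in[q]}$ with every $(T_j,W_j)\in\calS$, the $T_j$'s partitioning $[n]$, $\sigma(j)\in T_j$, and $W_j=\{\sigma'(j)\}$. Feasible assignments correspond bijectively to infinite walks in $G$, and $\mathrm{Sym}([q])$ acts freely and arc-preservingly on $G$ by worker relabeling (sending $\sigma$ to $\sigma\circ\pi^{-1}$ and permuting arc labels accordingly). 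The pivotal tool is the \emph{cyclic-shift concatenation}: any finite walk $\gamma$ in $G$ from some state $\sigma$ to $\rho\sigma$, with $\rho\in\mathrm{Sym}([q])$ a $q$-cycle, yields via the concatenation $\gamma\cdot(\rho\cdot\gamma)\cdots(\rho^{q-1}\cdot\gamma)$ a closed walk of length $q|\gamma|$ in $G$: consecutive pieces match up at $\rho^{k+1}\sigma$, and, writing $c_{j,i}$ for the number of times worker $j$ performs task $i$ in $\gamma$, worker $\ell$'s count for task $i$ in the concatenation is $\sum_{k=0}^{q-1}c_{\rho^{-k}(\ell),i}$. Because $\rho$ is a $q$-cycle, $\rho^{-k}(\ell)$ runs over all of $[q]$ as $k$ varies, so this sum equals $\sum_j c_{j,i}=|\gamma|$, independent of $\ell$. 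The concatenation therefore describes a periodic balanced feasible assignment.

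It remains to produce such a walk $\gamma$, and this is where $\ref{ext_cond1}$ enters. Let $f$ be a feasible assignment with worker $j^*$ performing every task at least once. Then $(\sigma_r)_{r\geq 0}$ is an infinite walk in the finite graph $G$ along which the values $\sigma_r(j^*)$ exhaust $U(\calI)$. My plan is to exploit this transitivity of $j^*$'s trajectory on $U(\calI)$, together with the $\mathrm{Sym}([q])$-symmetry of $G$, to construct a finite walk $\gamma\colon\sigma\to\rho\sigma$ with $\rho$ a $q$-cycle. The main obstacle I anticipate is that the cumulative shifts $\sigma_0^{-1}\circ\sigma_r$ realized along $f$ by itself may avoid $q$-cycles; the argument must then combine portions of $f$ with shifted continuations selected within the strongly connected components of $G$ reachable from $f$'s trajectory, using that the symmetry group acts on these SCCs and that the task-trajectory of $j^*$ visiting every element of $U(\calI)$ forces enough variation in the realizable shifts for a $q$-cycle to appear. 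Once $\gamma$ is available, the cyclic-shift concatenation above supplies a balanced feasible assignment and hence $\ref{ext_cond2}$; Theorem~\ref{thm:periodic} then upgrades periodicity to the bound $q^2q!$ required by $\ref{ext_cond3}$.
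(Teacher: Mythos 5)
Your reduction of everything to $\ref{ext_cond1}\Rightarrow\ref{ext_cond2}$ is right, and your cyclic-shift counting argument (concatenating $\gamma,\rho\cdot\gamma,\ldots,\rho^{q-1}\cdot\gamma$ and summing $c_{\rho^{-k}(\ell),i}$ over $k$) is correct as far as it goes. But the heart of the proof---producing a finite feasible walk $\gamma$ from some state $\sigma$ to $\rho\sigma$ with $\rho$ a $q$-cycle---is exactly the part you leave as a ``plan,'' and this is a genuine gap, not a technicality. Under an arbitrary set $\calS$ of schedules there is no reason such a walk exists: the permutations realizable as cumulative shifts form (within a strongly connected component of your graph $G$) a subgroup of $\mathrm{Sym}([q])$, and hypothesis~\ref{ext_cond1} only forces this subgroup to move one worker through all of $U(\calI)$, i.e.\ (essentially) to be transitive. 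Transitive subgroups of $\mathrm{Sym}([q])$ need not contain any $q$-cycle (for $q=4$, the Klein four-group), so ``enough variation in the realizable shifts for a $q$-cycle to appear'' is precisely the kind of claim that can fail, and balancedness can nevertheless hold by mixing walks with different non-cyclic shifts. The paper only obtains a $q$-cycle (Hamiltonian-cycle) structure in the \emph{basic} version, via merge operations that swap two workers' futures at a common idle time; those swaps are not available here because they can produce schedules outside $\calS$.

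The paper's actual route avoids the $q$-cycle requirement altogether. It works with the digraph $D^{\calI,\calF}$ on $U(\calI)$ built from a universal set $\calF$ of feasible assignments: hypothesis~\ref{ext_cond1} plus Lemma~\ref{lem:induced_walk} makes this graph connected, hence Eulerian, and then Lemma~\ref{lem:eulerian_periodic} (via Proposition~\ref{prop:graph}) finishes. The balancing mechanism there is an Eulerian circuit of one connected component of the auxiliary graph $\widetilde D$ on \emph{all} bijections (your $G$ is essentially this graph), combined with the counting identity of Lemma~\ref{lem:factorial}, which says that within a component every pebble/worker traverses every arc of $D^{\calI,\calF}$ the same number of times---no Hamiltonicity, no $q$-cycle shift, and no transitivity of the shift group is needed. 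If you want to salvage your approach you would have to replace ``find a $q$-cycle shift'' by an averaging argument over the whole component (which is what Lemma~\ref{lem:factorial} does); as written, the key existence step is unproved and most likely false in the stated generality.
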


Note that the implication $\ref{ext_cond2}\Rightarrow\ref{ext_cond3}$ is a direct consequence of Theorem~\ref{thm:periodic} and the implication $\ref{ext_cond3}\Rightarrow\ref{ext_cond1}$ is straightforward, as for Theorem~\ref{thm:cns}. The proof focuses on showing $\ref{ext_cond1}\Rightarrow\ref{ext_cond2}$.

\subsection*{Acknowledgments} This research was partially supported by the SNCF as part of a CIFRE PhD. The authors are grateful to Rolf Nelson van Lieshout for suggesting the possible existence of periodic solutions of period $q$ in the context of Theorem~\ref{thm:cns}, and to Jean-François Delmas for pointing out the situation of ``normal numbers,'' whose existence status shares some similarity with that of balanced feasible assignments (see Section~\ref{subsec:all}).

\section{Main tools}
This section introduces tools that will be useful for the proofs of the theorems.
The results and the constructions of this section are valid for both the basic and extended versions, except for Lemma~\ref{lem:connected_unconstrained} which only concerns the basic version.

\subsection{Building a new feasible assignment from a sequence of feasible assignments}\label{sec:feasible_permutation}
Let $\calI$ be an instance such that $|U(\calI)|=q$.
For every feasible assignment $f$ and every $r \in \Z_{>0}$, we introduce the map $\varphi_{f,r} \colon i \in U(\calI) \mapsto f(i,r) \in [q]$. The assumption $|U(\calI)| = q$ makes this map $\varphi_{f,r}$ a bijection.
In the proofs, we will build new feasible assignments from sequences of feasible assignments. Let $f_1,f_2,\ldots$ be an infinite sequence of feasible assignments. Define inductively permutations $\pi_r$ of $[q]$ by the equation $\pi_{r+1} = \pi_r\circ\varphi_{f_r,2}\circ \varphi_{f_{r+1}, 1}^{-1}$, where $\pi_1$ is an arbitrary permutation of $[q]$. 
This implies in particular 
\begin{equation}\label{eq:pi}
    (\pi_{r+1}\circ f_{r+1})(i,1) = (\pi_{r}\circ f_{r})(i,2) \quad \forall i\in U(\calI)\, .
\end{equation}

\begin{lemma}\label{lem:seq}
    The map $(i,r) \mapsto (\pi_r \circ f_r)(i,1)$ is a feasible assignment.
\end{lemma}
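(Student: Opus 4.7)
The plan is to verify directly that the map $g \colon (i,r) \mapsto (\pi_r\circ f_r)(i,1)$ satisfies the non-overlapping property~\eqref{eq:feasible}. Since each $f_r$ is feasible and each $\pi_r$ is a bijection of $[q]$, feasibility of $g$ will follow from a short case analysis comparing $g(i,r)$ with $g(i',r')$ whenever the corresponding occurrences overlap. As noted just after~\eqref{eq:feasible}, overlap forces $|r-r'|\leq 1$ and $i\neq i'$, so only two cases arise.

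The first case is $r=r'$. Then we need $(\pi_r\circ f_r)(i,1) \neq (\pi_r\circ f_r)(i',1)$. Because $\pi_r$ is a permutation, this reduces to $f_r(i,1)\neq f_r(i',1)$. The overlap hypothesis translates to $[s_i,e_i)\cap[s_{i'},e_{i'})\neq\varnothing$, which is the same as saying that the first occurrences of tasks $i$ and $i'$ overlap in the instance; feasibility of $f_r$ then gives the desired inequality.

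The second, and less immediate, case is $r'=r+1$ (the symmetric case being analogous). Here we need
\[
(\pi_r\circ f_r)(i,1) \neq (\pi_{r+1}\circ f_{r+1})(i',1).
\]
The overlap hypothesis $[s_i+r,e_i+r)\cap[s_{i'}+r+1,e_{i'}+r+1)\neq\varnothing$ rewrites as $[s_i,e_i)\cap[s_{i'}+1,e_{i'}+1)\neq\varnothing$, which forces $s_{i'}+1 < e_i \leq 1$, hence $s_{i'}\leq 0$, i.e., $i'\in U(\calI)$. This is exactly the situation where the stitching identity~\eqref{eq:pi} applies, giving
\[
(\pi_{r+1}\circ f_{r+1})(i',1) = (\pi_r\circ f_r)(i',2).
\]
Using once more that $\pi_r$ is a bijection, the claim reduces to $f_r(i,1)\neq f_r(i',2)$. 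Shifting the overlap relation by $1$, one sees that the first occurrence of $i$ and the second occurrence of $i'$ overlap in the instance, so feasibility of $f_r$ yields the inequality.

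I do not expect any real obstacle here: the lemma is essentially a bookkeeping statement saying that~\eqref{eq:pi} was designed precisely so that ``gluing'' the first week of $f_{r+1}$ to the second week of $f_r$ preserves feasibility. The only subtlety to flag clearly in the write-up is why the second case forces $i'\in U(\calI)$, so that~\eqref{eq:pi} is actually applicable; this uses the hypothesis $e_i\leq 1$ in a crucial way, and tacitly also the assumption $|U(\calI)|=q$ that makes each $\varphi_{f_r,k}$ a bijection and hence the $\pi_r$'s well-defined.
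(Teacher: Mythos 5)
Your verification of the non-overlapping property is correct and matches the paper's argument case by case: the reduction via bijectivity of $\pi_r$ when $r=r'$, the observation that an overlap with $r'=r+1$ forces $i'\in U(\calI)$ (the paper phrases this contrapositively: if $i'\notin U(\calI)$ there is no overlap), the use of~\eqref{eq:pi} to reduce to $f_r(i,1)\neq f_r(i',2)$, and the trivial case $|r-r'|\geq 2$. Your remark about why $i'\in U(\calI)$ is needed for~\eqref{eq:pi} to apply is exactly the right subtlety to flag.

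However, there is a genuine gap: you only prove feasibility in the sense of the basic version. Lemma~\ref{lem:seq} sits in Section~2, which is explicitly stated to hold for both the basic and the extended versions (and it is later invoked in the proof of Lemma~\ref{lem:eulerian_periodic}, i.e., for instances carrying a set $\calS$ of valid schedules). For the extended version, feasibility of $g$ additionally requires that every schedule induced by $g$ belongs to $\calS$, and your write-up never addresses this. The paper closes this by showing that the schedule $(T,W)$ induced by $g$ for worker $j$ on week $r$ coincides with the schedule induced by $f_r$ for worker $\pi_r^{-1}(j)$ on the first week: this is immediate when $W=\varnothing$, and when $W=\{i\}$ with $i\in U(\calI)$ it uses~\eqref{eq:pi} again to identify $g(i,r+1)$ with $(\pi_r\circ f_r)(i,2)$. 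Since $f_r$ is feasible, that schedule lies in $\calS$. Your proof needs this extra step to establish the lemma in the generality in which the paper uses it.
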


\begin{proof}
    
    Let us show that $g\colon(i,r)\mapsto (\pi_r\circ f_r)(i,1)$ satisfies the non-overlapping property by checking the contrapositive of~\eqref{eq:feasible}.
    Consider $i,i'\in [n]$ with $i\neq i'$ and $r,r'\in\Z_{>0}$. Suppose $g(i,r) = g(i',r')$, i.e., $(\pi_r\circ f_r)(i,1) = (\pi_{r'}\circ f_{r'})(i',1)$. Without loss of generality, suppose that $r \leq r'$. 
    
    Consider first the case when $r=r'$. 
    Since $\pi_r$ is a permutation, we have $f_r(i,1) = f_r(i',1)$. Since $f_r$ is feasible, then the contrapositive holds for $f_r$, namely, $[s_i+1,e_i+1) \cap [s_{i'}+1,e_{i'}+1) = \varnothing$, which is equivalent to $[s_i+r,e_i+r) \cap [s_{i'}+r',e_{i'}+r') = \varnothing$, as desired.

    Consider now the case when $r+1 = r'$. Note first that if $i'\notin U(\calI)$, then $e_i +r < s_{i'}+r+1$ and so $[s_i+r,e_i+r) \cap [s_{i'}+r',e_{i'}+r') = \varnothing$.
    Suppose now that $i'\in U(\calI)$. Using the definition of $\pi_{r+1}$, we have $(\pi_r\circ f_r)(i,1) = (\pi_r\circ f_r)(i',2)$. Since $\pi_r$ is a permutation, then $f_r(i,1)=f_r(i',2)$. The contrapositive holds for $f_r$ feasible assignment, so $[s_i+r,e_i+r) \cap [s_{i'}+r',e_{i'}+r') = \varnothing$.

    Finally, if $r + 2 \leq r'$, then $[s_i+r,e_i+r) \cap [s_{i'}+r',e_{i'}+r')$ is necessarily empty. Therefore, $g$ is non-overlapping, and thus feasible for the basic version.
    For the extended version, it remains to check that the schedules induced by $g$ belong to the set $\calS$ of schedules given in input.
    
    Let $(T,W)$ be a schedule induced by $g$ for the worker $j$ on week $r$. 
    If $W=\varnothing$, then the pair $(T,W)$ is exactly the schedule induced by $f_r$ for worker $\pi_r^{-1}(j)$ on the first week. Otherwise, let $i$ be the task in $U(\calI)$ such that $W=\{i\}$. By~\eqref{eq:pi}, we have $g(i,r+1) = (\pi_r \circ f_r)(i,2)$. Hence, $(T, W)$ is the schedule induced by $f_r$ for worker $\pi_r^{-1}(j)$ on the first week. 
    Therefore, every schedule induced by $g$ is also induced by some $f_r$, which means that this schedule belongs to $\calS$.
\end{proof}

\subsection{Encoding assignments as a directed graph: definition and properties}\label{sec:digraph}
Let $\calI$ be an instance such that $|U(\calI)|=q$ and $\calF$ be an arbitrary set of feasible assignments for $\calI$. 
We introduce a directed multi-graph $D^{\calI, \calF}$ which will be useful in the rest of the paper. Its vertex set is $U(\calI)$. Its arc set $A^\calF$ is obtained by introducing $q$ arcs for each $f \in \calF$: an arc from $i \in U(\calI)$ to $i' \in U(\calI)$ whenever $f(i,1) = f(i', 2)$---repetitions are allowed and give rise to parallel arcs---; such an arc is labeled with $f$. 

Note the following properties:
\begin{itemize}
    \item The $q$ arcs labeled with a same feasible assignment $f$ form a collection of vertex-disjoint directed cycles (possibly loops): each vertex is by construction the head of exactly one arc labeled with $f$ and the tail of exactly one arc labeled with $f$.
    \item The number of arcs in $A^\calF$ is $q|\calF|$.
    \item When $D^{\calI, \calF}$ is weakly connected---i.e., the underlying undirected graph is connected---it is also strongly connected and Eulerian.
\end{itemize}
We finish by another property of this graph, which requires a proper proof. This property, stated as a lemma, will actually be used within the proofs of Lemmas~\ref{lem:connected_unconstrained} and \ref{lem:connected_balanced}, in the next section. It deals with the notion of ``universal'' sets of feasible assignments we introduce now. A set $\calF$ of feasible assignments is {\em universal} if the existence of a feasible assignment $f$ such that $f(i,1)=f(i',2)$ for some $i,i'\in U(\calI)$ implies the existence of a feasible assignment $g$ in $\calF$ such that $g(i,1)=g(i',2)$. In other words, given two sets $\calF,\calF'$ of feasible assignments, with $\calF$ being universal, if there is an arc in $A^{\calF'}$ from a vertex $i$ to a vertex $i'$, then there is such an arc in $A^\calF$ as well. Note that universal sets of feasible assignments always exist, and can be built with a straightforward greedy procedure.


\begin{lemma}\label{lem:induced_walk}
    Let $\calF$ be a set of feasible assignments. Consider the sequence of tasks performed by a worker in some feasible assignment (not necessarily in $\calF$). If $\calF$ is universal, then the induced sequence in $U(\calI)$ translates into a walk of $D^{\calI, \calF}$.
\end{lemma}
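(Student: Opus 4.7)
The plan is to describe the induced sequence explicitly and then, for each consecutive pair of its terms, to produce a shifted copy of the given feasible assignment whose universality-promotion yields the required arc. Fix the feasible assignment $f^{*}$ alluded to in the statement and a worker $j\in[q]$. Since $|U(\calI)|=q$ and every task $i\in U(\calI)$ satisfies $0\in[s_i,e_i)$, the $r$th occurrences of the tasks in $U(\calI)$ all contain the integer $r$, so at each time $r\geq 1$ the $q$ workers must together perform the $q$ tasks of $U(\calI)$. In particular, there is a unique $i_r\in U(\calI)$ with $f^{*}(i_r,r)=j$, and the induced sequence in $U(\calI)$ is $(i_r)_{r\geq 1}$. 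It remains to show that, for every $r$, the pair $(i_r,i_{r+1})$ is the tail and head of some arc of $D^{\calI,\calF}$.

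For each $r\geq 1$, I would introduce the shifted assignment $\widetilde f_r(i,s):=f^{*}(i,s+r-1)$, defined on $[n]\times\Z_{>0}$. Feasibility of $f^{*}$ transfers to $\widetilde f_r$: the non-overlapping condition~\eqref{eq:feasible} depends only on whether two shifted intervals intersect, and is invariant under a common translation of the indices by $r-1$. For the extended version, each schedule induced by $\widetilde f_r$ on week $s$ for a worker $j'$ coincides with the one induced by $f^{*}$ on week $s+r-1$ for the same worker, and hence still belongs to $\calS$. So $\widetilde f_r$ is indeed a feasible assignment.

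By construction, $\widetilde f_r(i_r,1)=f^{*}(i_r,r)=j=f^{*}(i_{r+1},r+1)=\widetilde f_r(i_{r+1},2)$, which is exactly the hypothesis appearing in the definition of a universal family. Applying universality of $\calF$ to this witness yields some $f\in\calF$ with $f(i_r,1)=f(i_{r+1},2)$, i.e., an arc from $i_r$ to $i_{r+1}$ in $D^{\calI,\calF}$. Varying $r\geq 1$ and concatenating these arcs assembles the desired walk. The only non-routine verification is the shift-invariance of feasibility in the extended version, but this reduces to the observation that the multiset of schedules induced by $\widetilde f_r$ is contained in that induced by $f^{*}$, so no new schedule outside $\calS$ is ever produced.
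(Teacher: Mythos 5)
Your proof is correct and takes essentially the same route as the paper: the shifted assignments $\widetilde f_r(i,s)=f^{*}(i,s+r-1)$ are exactly the witnesses $g_h$ that the paper invokes parenthetically before applying universality to each consecutive pair $(i_r,i_{r+1})$. You merely spell out in more detail the shift-invariance of feasibility (including for schedules in the extended version) and the fact that each worker performs exactly one task of $U(\calI)$ each week, both of which the paper leaves implicit.
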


\begin{proof}
    Let $f$ be a feasible assignment for $\calI$ and $j$ be a worker. Suppose that $\calF$ is universal. Consider the sequence of tasks performed by $j$ in $f$ and write $i_1,i_2,\ldots$ the induced sequence in $U(\calI)$. 
    This means that, for every $r\in \Z_{>0}$, we have $f(i_r,r)=j=f(i_{r+1},r+1)$. Since $\calF$ is universal, there exists an arc from $i_r$ to $i_{r+1}$ in $D^{\calI,\calF}$ (the assignment $g_h\colon(i,r)\mapsto f(i,r+h-1)$ is feasible and verifies $g_h(i_h,1)=g_{h}(i_{h+1},2)$). Therefore, the sequence $i_1,i_2,\ldots$ translates into a walk in $D^{\calI, \calF}$.
\end{proof}

\subsection{From a general instance \texorpdfstring{$\calI$}{I} to one \texorpdfstring{$\calI'$}{I'} satisfying \texorpdfstring{$|U(\calI')|=q$}{|U(I')|=q}}\label{sec:transfo}
In this section, we describe a transformation of an instance $\calI$ into an instance $\calI'$, with the same number of workers $q$, satisfying $|U(\calI')|=q$. For some cases, this is a desirable property as already hinted in the previous sections. From now on we fix an instance $\calI$. Set 
\[
 n' \coloneqq n+q-|U(\calI)|\quad \text{and}\quad 
 [s_i,e_i) \coloneqq [0,\varepsilon) \text{ for $i \in \{n+1,\ldots,n'\}$, }
 \]
where $\varepsilon>0$ is chosen small enough, i.e., $\varepsilon \leq  \min\bigl(\{s_i \colon i\in [n]\setminus U(\calI)\}\cup\{s_i+1 \colon i\in U(\calI)\}\bigl)$.
We can interpret the intervals with $i \in \{n+1,\ldots,n'\}$ as new tasks, called {\em fictitious}. For the extended version, we can also get a new set $\calS'$ of valid schedules in a natural way by adding in all possible ways these new tasks to the original valid schedules without creating any overlap. This gives rise to a new instance $\calI'=\left(\bigl([s_i,e_i)\bigl)_{i\in [n']}, q\right)$, or $\calI'=\left(\bigl([s_i,e_i)\bigl)_{i\in [n']}, q, \calS'\right)$ for the extended version. (Note that $\calS'$ can actually be defined in a formal way: for each pair $(T,W) \in \calS$, add to $\calS'$ the pair $(T',W')$ with
\begin{itemize}
    \item $T'=T$ when $T \cap U(\calI) \neq \varnothing$.
    \item $T'=T\cup\{i\}$ for all $i \in \{n+1,\ldots,n'\}$, when $T \cap U(\calI) = \varnothing$.
    \item $W' = W$ when $W\neq \varnothing$.
    \item $W' = \{i\}$ for all $i \in \{n+1,\ldots,n'\}$, when $W=\varnothing$.)
\end{itemize}
An illustration of this transformation is given in Figure~\ref{fig:transfo}.
\begin{figure}
    \centering
    \begin{subfigure}{\textwidth}
    \centering
    \includegraphics[width = \textwidth]{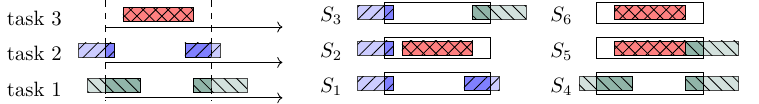}
    \caption{An instance $\calI$ with three tasks ($n=3$), a set of schedules $\calS = \{S_k\colon k\in[6]\}$, and four workers ($q=4$).}
    \end{subfigure}

    \vspace{5mm}
    \begin{subfigure}{\textwidth}
    \centering
        \includegraphics[width = \textwidth]{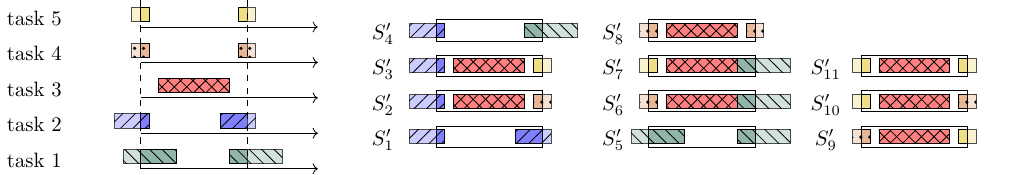}
    \caption{The new instance $\calI'$ with five tasks $(n'=5)$, a set of schedules $\calS'=\{S'_k\colon k\in[11]\}$, and four workers ($q=4$).}
    \end{subfigure}

    \caption{An illustration of the transformation of an instance $\calI$ into a new instance $\calI'$ verifying $|U(\calI')|=q$.}
    \label{fig:transfo}
\end{figure}

The next two lemmas establish relations between the assignments for $\calI$ and for $\calI'$ especially regarding their feasibility and balancedness.

\begin{lemma}\label{lem:transfo_feasible}
    If there exists a feasible assignment $f$ for $\calI$, then there exists a feasible assignment for $\calI'$ whose restriction to the tasks in $\calI$ is $f$. Conversely, if there exists a feasible assignment $f'$ for $\calI'$, then its restriction to the tasks in $\calI$ is a feasible assignment. 
\end{lemma}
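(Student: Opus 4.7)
For the forward direction, I extend $f$ to $f'$ on $\calI'$ one week at a time. The choice of $\varepsilon$ ensures that at each point $r \in \Z_{>0}$, the only occurrences of tasks in $\calI$ overlapping the fictitious interval $[r, \varepsilon + r)$ are the $r$th occurrences of tasks in $U(\calI)$: for $i \in [n] \setminus U(\calI)$ one has $s_i \geq \varepsilon$, so $[s_i + r, e_i + r)$ starts at or after $\varepsilon + r$, and for the $(r+1)$th occurrence of any $i \in U(\calI)$ one has $s_i + 1 \geq \varepsilon$ for the same reason. The non-overlapping property of $f$ then forces $\{f(i, r) \colon i \in U(\calI)\}$ to consist of $|U(\calI)|$ distinct workers, leaving exactly $n' - n$ free workers at time $r$. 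I define $f'(i, r)$ for the $n' - n$ fictitious tasks by any bijection onto these free workers, and set $f'(i, r) = f(i, r)$ otherwise.

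Non-overlap for $f'$ is then immediate from the preceding observations, together with the fact that two fictitious occurrences in different weeks are separated by distance at least $1 > \varepsilon$. For the extended version, one still needs to show that every schedule induced by $f'$ belongs to $\calS'$. Writing $(T, W)$ for the schedule of a worker $j$ at week $r$ under $f$ and $(T', W')$ for the one under $f'$, I observe that $T' = T$ exactly when $j$ performs some $U(\calI)$-task at time $r$, i.e., when $T \cap U(\calI) \neq \varnothing$; otherwise $T' = T \cup \{i\}$ for some fictitious $i$. The analogous dichotomy for $W$, namely $W' = W$ versus $W' = \{i\}$ for fictitious $i$, depends on whether $W = \varnothing$. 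This matches precisely the four rules used to define $\calS'$ from $\calS$, so $(T', W') \in \calS'$.

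The converse is a mirror image. Given $f'$ feasible for $\calI'$, its restriction $f$ to $[n]$ inherits the non-overlapping property trivially. For the extended version, the schedule $(T, W)$ induced by $f$ is obtained from the corresponding $(T', W') \in \calS'$ by discarding fictitious indices; the same case analysis read in reverse recovers the unique $(T, W) \in \calS$ from which $(T', W')$ was built, and this $(T, W)$ is exactly the schedule induced by $f$. The main obstacle throughout is not conceptual but purely a careful bookkeeping: matching the four construction rules of $\calS'$ with the four combinations of \emph{busy or free at time $r$} versus \emph{busy or free at time $r+1$} under $f$.
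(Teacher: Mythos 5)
Your proof is correct and takes essentially the same approach as the paper's: extend $f$ by assigning, at each integer time $r$, the fictitious occurrences to the workers left idle there (the choice of $\varepsilon$ and the construction of $\calS'$ ensuring no conflict and membership of the induced schedules in $\calS'$), and obtain the converse by restriction. You simply spell out the bookkeeping that the paper's brief proof leaves implicit.
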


\begin{proof}
    Let $f$ be a feasible assignment for $\calI$. The assignment $f$ can be extended on the fictitious tasks as follows: for each positive integer $r$, the assignment $f$ leaves a number of workers idle at time $r$ equal to the number of fictitious tasks; extending $f$ arbitrarily by assigning the $r$th occurrence of these tasks to these workers leads to a feasible assignment for $\calI'$. (The number $\varepsilon$ and the set $\calS'$, in the extended version, have been chosen so that this does not create any conflict.)
    
    Conversely, given a feasible assignment $f'$ for $\calI'$, its restriction to the tasks of $\calI$ is obviously feasible. (In the extended version, it comes from the definition of $\calS'$.)    
\end{proof}

The next lemma is immediate to prove and adds relevance to the transformation.
\begin{lemma}\label{lem:transfo_balanced}
    If there exists a balanced feasible assignment $f'$ for $\calI'$, then there exists a balanced feasible assignment $f$ for $\calI$. Moreover, if $f'$ is periodic, then $f$ can be chosen periodic with the same period as $f'$.
\end{lemma}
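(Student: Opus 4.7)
The plan is to simply take $f$ to be the restriction of $f'$ to the tasks of $\calI$, and check that feasibility, balancedness, and periodicity all pass to the restriction. Feasibility is already handled by the second half of Lemma~\ref{lem:transfo_feasible}, so nothing new is needed there.

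For balancedness, I would observe that by definition, $f'$ being balanced for $\calI'$ means
\[
\lim_{t\to +\infty}\frac{1}{t}\bigl|\{r\in[t]\colon f'(i,r)=j\}\bigl|=\frac{1}{q}
\]
for \emph{every} $i\in[n']$ and every $j\in[q]$. In particular, this limit identity holds for every $i\in[n]$, and since $f(i,r)=f'(i,r)$ for such $i$, the limit in \eqref{eq:limit} is $1/q$ for $f$ as well. Hence $f$ is balanced.

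For the periodicity statement, if $f'$ satisfies $f'(i,r)=f'(i,r+h)$ for all $i\in[n']$ and $r\in\Z_{>0}$, then restricting the first coordinate to $i\in[n]$ preserves this identity, so $f(i,r)=f(i,r+h)$ for all $i\in[n]$ and $r\in\Z_{>0}$, giving $f$ the same period $h$ as $f'$.

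There is no real obstacle here: the transformation from $\calI$ to $\calI'$ only added tasks (the fictitious ones indexed $n+1,\ldots,n'$) without altering the original intervals $[s_i,e_i)$ for $i\in[n]$, and the workers set $[q]$ is unchanged. Consequently, any statistical or structural property of $f'$ that is quantified over the tasks of $\calI'$ is inherited by its restriction when the quantification is reduced to the tasks of $\calI$.
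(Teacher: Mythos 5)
Your proposal is correct and follows essentially the same route as the paper: restrict $f'$ to the tasks of $\calI$, note that feasibility of the restriction is exactly the converse direction of Lemma~\ref{lem:transfo_feasible}, and observe that the balancedness limit in \eqref{eq:limit} and the periodicity identity are inherited verbatim since the original intervals and the worker set $[q]$ are unchanged. Nothing is missing.
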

\begin{proof}
    If there exists a balanced feasible assignment $f'$ for $\calI'$, then the restriction of this assignment to the tasks of $\calI$ is feasible, balanced, and periodic with same period as $f'$ when $f'$ is periodic.
\end{proof}

The next lemma provides us with a connectivity result for the basic version of the problem. The notion of ``universal'' set has been introduced in Section~\ref{sec:digraph}.

\begin{lemma}\label{lem:connected_unconstrained}
    Assume that $\calI$ is without schedules. Let $\calF$ be a universal set of feasible assignments for $\calI'$. If there exists a feasible assignment for $\calI$ with a worker performing each task at least once, then $D^{\calI',\calF}$ is Eulerian.
\end{lemma}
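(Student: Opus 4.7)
The plan is to show that $D^{\calI',\calF}$ is Eulerian by checking the two standard conditions: equal in- and out-degree at every vertex, and strong connectivity. The degree condition is immediate from the observation listed in Section~\ref{sec:digraph}: every $f \in \calF$ contributes $q$ arcs forming a vertex-disjoint union of directed cycles covering $U(\calI')$, so each vertex has in-degree and out-degree equal to $|\calF|$. Because of this balance, strong and weak connectivity coincide, and it remains to rule out the existence of a proper non-empty subset $S \subsetneq U(\calI')$ with no arc between $S$ and its complement.

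First, I would extend the feasible assignment $f$ provided by the hypothesis to a feasible assignment $\tilde f$ of $\calI'$ via Lemma~\ref{lem:transfo_feasible}. The worker $j_0$ who performs every task of $\calI$ in $f$ also does so in $\tilde f$; in particular, $j_0$ performs every task of $U(\calI)$ at some integer time. By Lemma~\ref{lem:induced_walk}, the sequence of tasks $j_0$ performs at successive integer times in $\tilde f$ induces a walk in $D^{\calI',\calF}$, and this walk visits every vertex of $U(\calI) \subseteq U(\calI')$. Consequently, all vertices of $U(\calI)$ lie in a single strongly connected component.

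For the fictitious vertices, I would exploit the fact that all fictitious tasks share the common interval $[0,\varepsilon)$: permuting their labels sends feasible assignments to feasible ones. Combined with the universality of $\calF$, this yields arcs in $D^{\calI',\calF}$ between any two fictitious vertices (given any feasible extension of $f$, swapping the labels of two fictitious tasks at time $2$ produces the desired arc, and universality lifts this to $\calF$). Hence all fictitious vertices lie in a single strongly connected component, and the proof reduces to exhibiting a single arc of $D^{\calI',\calF}$ between the fictitious part and $U(\calI)$.

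The hard part will be this last step. By universality of $\calF$, such an arc exists if and only if there is a feasible assignment $g$ of $\calI$ with $B_1(g) \neq B_2(g)$, where $B_r(g) = \{g(i,r) \colon i \in U(\calI)\}$ denotes the set of workers handling a $U(\calI)$ task at integer time $r$ in $g$. If $B_r(f)$ is already not constant in $r$, take $g = f$; otherwise $B_r(f) = B$ is a proper subset of $[q]$ (the case $|U(\calI)| = q$ being trivial since there are then no fictitious tasks), and I plan to construct $g$ by reassigning a single $U(\calI)$-task occurrence at time $2$ from some $j \in B$ to some $k \in [q] \setminus B$. Feasibility of $g$ may fail if $k$ already has a task in $f$ overlapping this occurrence, and handling this may require cascading local reassignments. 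The key is to use the slack guaranteed by the hypothesis (that $j_0$ accommodates every task in its schedule, so no worker outside $B$ can be permanently blocked from taking some $U(\calI)$ occurrence) together with the absence of schedule constraints in the basic version to argue that a valid swap can always be carried out.
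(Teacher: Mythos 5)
Your opening matches the paper: the degree condition from the remark in Section~\ref{sec:digraph}, the extension of $f$ to $\calI'$ via Lemma~\ref{lem:transfo_feasible}, and the use of Lemma~\ref{lem:induced_walk} to place all of $U(\calI)$ in a single component. After that there are two genuine gaps. First, the claim that $D^{\calI',\calF}$ has arcs between any two fictitious vertices is not established by your relabelling argument. An arc from a fictitious $i'$ to a fictitious $i''$ requires a feasible assignment in which one worker performs a fictitious task in week~$1$ \emph{and} a fictitious task in week~$2$; permuting the labels of the fictitious tasks inside a week cannot create such a worker if, in every feasible assignment, the worker handling $i'$ in week~$1$ handles a task of $U(\calI)$ in week~$2$ --- and ruling out that situation is precisely what would have to be proved. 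Second, the step you yourself flag as hard --- exhibiting an arc between the fictitious part and $U(\calI)$ --- is only a plan: ``cascading local reassignments'' and ``the slack guaranteed by the hypothesis'' are exactly where the difficulty sits, and nothing in the sketch shows that the cascade terminates in a feasible assignment, nor is the assertion that no worker outside your set $B$ can be ``permanently blocked'' justified. Note also that your decomposition needs \emph{both} claims, since a single fictitious-to-$U(\calI)$ arc only helps if the fictitious vertices are already mutually connected.

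The paper sidesteps both issues by connecting \emph{every} fictitious vertex $i'$ directly to $U(\calI)$, exploiting the worker $j^\star$ who performs every task of $\calI$. If $j^\star$ performs some fictitious task, swapping that occurrence with the corresponding occurrence of $i'$ (two copies of the same interval $[r,r+\varepsilon)$) is trivially feasible and yields an arc between $i'$ and $U(\calI)$. If $j^\star$ performs no fictitious task, one shifts time so that the occurrence, assigned to $j^\star$, of the task $i\in U(\calI)$ with minimum ending time becomes a first-week occurrence, and swaps it with the first occurrence of $i'$; feasibility then rests on the choice of $\varepsilon$ and on the minimality of $e_i$ (otherwise $j^\star$ could not perform all tasks while avoiding fictitious ones). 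This is the targeted use of $j^\star$ that your step (b) lacks, and it makes your step (a) unnecessary. To repair your proposal you would need an argument of that strength for (b), and either prove or bypass (a).
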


\begin{proof}
Suppose there exists a feasible assignment for $\calI$ with a worker performing each task at least once. With Lemma~\ref{lem:transfo_feasible}, there exists a feasible assignment $f'$ for $\calI'$ with a worker $j^\star$ performing each task of $\calI$ at least once.
We are going to show that $D^{\calI',\calF}$ is weakly connected. This will be enough with the remark in Section~\ref{sec:digraph}. According to Lemma~\ref{lem:induced_walk} applied to $f'$ and worker $j^\star$, there is a walk in $D^{\calI',\calF}$ visiting every vertex of $U(\calI)$. The remainder of the proof is devoted to showing that every vertex in $U(\calI')\setminus U(\calI)$ has a neighbor in $U(\calI)$.
Let $i'$ be a task in $U(\calI')\setminus U(\calI)$.

Suppose first that the worker $j^\star$ performs no fictitious task.
Let $i$ be a task of $U(\calI)$ with minimum ending time $e_i$ and $r$ be an integer such that $f'(i,r)=j^\star$. Consider the assignment $f''$ defined by $f''(\cdot,h)\coloneqq f'(\cdot,r+h)$ for all $h\in\Z_{>0}$. (Note that it is just the assignment obtained by ``forgetting'' the first $r$ weeks.) Then swap the first occurrences of $i$ and $i'$ in $f''$ to obtain yet another assignment. This new assignment is feasible:
the task $i'$ overlaps no other task by the choice of $\varepsilon$; the task $i$ overlaps no other task either, otherwise this would prevent $j^\star$ to perform all tasks in $f$ while performing no fictitious task (by the choice of $i$).
This shows the existence of an arc from $i'$ to $U(\calI)$ in $D^{\calI',\calF}$, since $r$ was chosen so that $f'(i,r)=j^\star$.

Suppose second that the worker $j^\star$ performs at least one fictitious task $i''$ and let $r$ be such that $f'(i'',r)=j^\star$. Since swapping the $r$th occurrences of $i'$ and $i''$ in $f'$ results in a feasible assignment, there exists an arc from $i'$ to $U(\calI)$ or the other way around.
\end{proof}

The next lemma provides us with another connectivity result, valid for the extended version.
\begin{lemma}\label{lem:connected_balanced}
    Let $\calF$ be a universal set of feasible assignments for $\calI'$.
    If there exists a balanced feasible assignment for $\calI$, then $D^{\calI',\calF}$ is Eulerian.
\end{lemma}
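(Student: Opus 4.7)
The plan is to show that $D^{\calI',\calF}$ is weakly connected; the stronger conclusion (Eulerianness) then follows immediately from the remark in Section~\ref{sec:digraph}. The crucial setup feature is that $|U(\calI')|=q$ combined with the non-overlapping property forces each worker, at every integer time $r$, to perform exactly one task in $U(\calI')$; so Lemma~\ref{lem:induced_walk} produces a clean, infinite walk for every worker and every feasible assignment of $\calI'$.

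First, I would apply Lemma~\ref{lem:transfo_feasible} to extend the given balanced feasible assignment $f$ on $\calI$ to a feasible assignment $f'$ on $\calI'$; note that the restriction of $f'$ to the tasks of $\calI$ remains $f$ and is therefore balanced. For each worker $j\in[q]$, Lemma~\ref{lem:induced_walk} then yields a walk $(i_r^j)_{r\geq 1}$ in $D^{\calI',\calF}$ whose $r$th vertex $i_r^j$ is the unique task in $U(\calI')$ performed by $j$ at time $r$.

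The key observation is that each such walk visits the entirety of $U(\calI)$. Indeed, balancedness of $f$ forces worker $j$ to perform every task $i\in[n]$ infinitely often; when $i\in U(\calI)$ one has $s_i\leq 0<e_i$, so the $r$th occurrence of $i$ straddles the integer time $r$, and whenever $j$ is assigned this occurrence one must have $i_r^j=i$. Consequently the subset $U(\calI)$ of vertices all lie in a single weakly connected component of $D^{\calI',\calF}$.

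It remains to attach each fictitious vertex $i'\in U(\calI')\setminus U(\calI)$ to that component. Every occurrence of $i'$ in $f'$ is assigned to some worker, so there exist $r^\star$ and $j^\star$ with $f'(i',r^\star)=j^\star$; but then $i_{r^\star}^{j^\star}=i'$, placing $i'$ on $j^\star$'s walk. Since that walk also visits all of $U(\calI)$ by the previous paragraph, $i'$ is weakly connected to $U(\calI)$. Hence $D^{\calI',\calF}$ is weakly connected, and therefore Eulerian. I do not anticipate a real obstacle: the balancedness hypothesis is strictly stronger than the one driving Lemma~\ref{lem:connected_unconstrained}, so the ad hoc swap-and-feasibility arguments needed there can be replaced by the direct pigeonhole used above, and in particular the argument is oblivious to the presence of the schedule set $\calS$.
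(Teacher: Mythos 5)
Your argument is correct and essentially the same as the paper's: extend the balanced assignment to $\calI'$ via Lemma~\ref{lem:transfo_feasible}, use Lemma~\ref{lem:induced_walk} to turn each worker's task sequence into a walk in $D^{\calI',\calF}$, observe that balancedness forces every worker's walk to visit all of $U(\calI)$, attach each fictitious vertex through the walk of a worker assigned one of its occurrences, and conclude weak connectivity, hence Eulerianness by the remark of Section~\ref{sec:digraph}. The only cosmetic difference is that the paper anchors each fictitious vertex at its first occurrence whereas you use an arbitrary occurrence; both treatments leave the degenerate case $U(\calI)=\varnothing$ equally implicit.
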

\begin{proof}
    Suppose there exists a balanced feasible assignment for $\calI$. In such an assignment, each worker performs each task of $\calI$ at least once. With Lemma~\ref{lem:transfo_feasible}, there exists a feasible assignment $f'$ for $\calI'$ with all workers performing each task of $\calI$ at least once. Since the first occurrence of every fictitious task is assigned to a worker who performs every task in $U(\calI)$ later on in $f'$, every fictitious task is the first vertex of a walk visiting every vertex in $U(\calI)$ (using Lemma~\ref{lem:induced_walk}). In case there is no fictitious task, there is a walk visiting every vertex in $U(\calI)$ as well, just by considering any worker. The graph $D^{\calI',\calF}$ is thus in any case weakly connected, and we conclude with the remark in Section~\ref{sec:digraph}.
\end{proof}

\section{Proof of Theorem~\texorpdfstring{\ref{thm:cns}}{1.1}}\label{sec:thm1}
In this section, we prove Theorem~\ref{thm:cns} for an instance $\calI=\left(\bigl([s_i,e_i)\bigl)_{i\in[n]}, q\right)$ without schedules.
As discussed in Section~\ref{sec:intro}, we are only left with proving the implication $\ref{cond1}\Rightarrow\ref{cond3}$.

For an instance $\calI$, define $\overline\calF_\calI$ as follows. 
Some feasible assignments induce the same assignment on the first week. For each assignment on the first week, pick a representative among the set of feasible assignments and put it in $\overline\calF_{\calI}$. Remark that $\overline{\calF_\calI}$ is universal.

Consider an instance $\calI$ such that $|U(\calI)|=q$ and a feasible assignment $f$. Suppose we are given a ``time'' $t \in (1,2]$ and two tasks $i_1,i_2 \in U(\calI)$, belonging to distinct cycles in $D^{\calI, \overline{\calF}_{\calI}}$ labeled with $f$, such that $f(i_1,1)$ and $f(i_2,1)$ are idle at time $t$. We define the {\em merge operation at time $t$ between the tasks $i_1$ and $i_2$}, providing a new feasible assignment $\tilde f$ by swapping all tasks assigned to the two workers after time $t$ as follows:

$$
\tilde{f}(i,r) \coloneqq \begin{cases}
    f(i,r) &\text{ if } f(i,r)\notin \{f(i_1,1), f(i_2,1)\},\\
    f(i,r) &\text{ if } e_i+r\leq t,\\
    f(i_1,1) &\text{ if }e_i+r> t \text{ and }f(i,r)=f(i_2,1),\\
    f(i_2,1) &\text{ if }e_i+r> t \text{ and }f(i,r)=f(i_1,1).\\
\end{cases}
$$
Note that no task $i$ performed by $f(i_1,1)$ or $f(i_2,1)$ is such that $t\in [s_{i}+1, e_{i}+1)$ since both workers are idle at time $t$. This makes the assignment $\tilde f$ feasible.

An illustration of a merge operation is given in Figure~\ref{fig:merge}.

\begin{figure}
    \begin{subfigure}{\textwidth}
        \includegraphics[width = \textwidth]{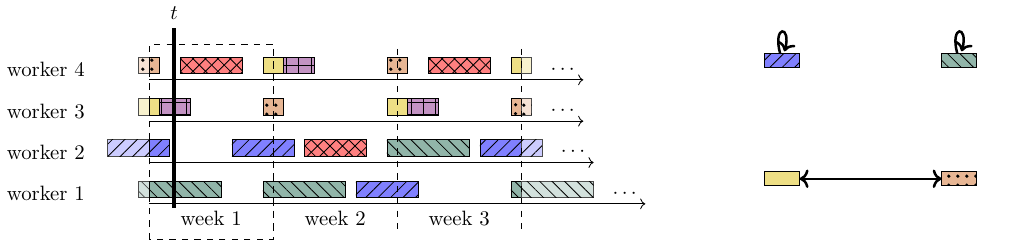}
    \caption{A feasible assignment $f\in\overline{\calF}_\calI$ for an instance $\calI$ such that $|U(\calI)|=q$. The arcs labeled by $f$ in $D^{\calI, \overline{\calF}_\calI}$ are illustrated on the right. A merge operation can be performed at time $t$ between the two tasks in hatched blue (worker $2$) and in dotted orange (worker $4$).}
    \label{fig:merge_1}
    \end{subfigure}
    
    \begin{subfigure}{\textwidth}
    \includegraphics[width = \textwidth]{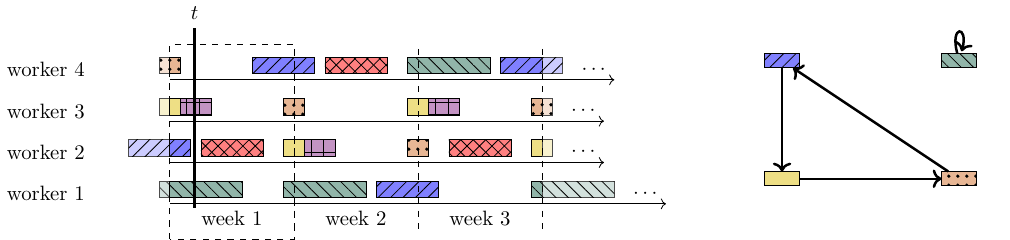}
    \caption{The assignment $\tilde f$ resulting from the merge operation of Figure~\ref{fig:merge_1}. The arcs labeled by the representative of $\tilde f$ in $D^{\calI, \overline{\calF}_\calI}$ are illustrated on the right.}
    \end{subfigure}
    \caption{An illustration of a merge operation on a feasible assignment.}
    \label{fig:merge}
\end{figure}

 \begin{lemma}\label{lem:variant}
    Let $\calI$ be an instance such that $|U(\calI)|=q$. Let $f$ be a feasible assignment and $\tilde f$ be the feasible assignment resulting from a merge operation on $f$.
    The number of cycles in $D^{\calI, \overline{\calF}_{\calI}}$ labeled with the representative of $\tilde f$ in $\overline{\calF}_{\calI}$ is smaller by one unit than the number of cycles labeled with the representative of $f$ in $\overline{\calF}_{\calI}$.
\end{lemma}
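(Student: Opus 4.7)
The plan is to reduce the cycle-counting statement to a clean identity between permutations. For any feasible assignment $g$, set $\sigma_g \coloneqq \varphi_{g,2}^{-1} \circ \varphi_{g,1}$; this is a permutation of $U(\calI)$ since $|U(\calI)| = q$ makes both factors bijections, and its cycle decomposition corresponds exactly to the decomposition of the $q$ arcs labeled with $g$ in $D^{\calI, \overline{\calF}_{\calI}}$. It thus suffices to show that, after choosing the representatives of the week-1 classes of $f$ and $\tilde f$ compatibly (a choice we are free to make by the very definition of $\overline{\calF}_{\calI}$, and which we may arrange so that $f$ and $\tilde f$ play the role of their own representatives), $\sigma_{\tilde f}$ has exactly one fewer cycle than $\sigma_f$. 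The structural identity I will establish is
\[
\sigma_{\tilde f} = \sigma_f \circ (i_1\, i_2),
\]
where $(i_1\, i_2)$ denotes the transposition swapping $i_1$ and $i_2$ in $U(\calI)$.

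To prove this identity, set $w_k \coloneqq f(i_k, 1)$ for $k \in \{1,2\}$ and argue two restriction claims. First, $\varphi_{\tilde f, 1}$ and $\varphi_{f, 1}$ coincide on $U(\calI)$: since $\varphi_{f,1}$ is a bijection from $U(\calI)$ onto $[q]$, the only $U(\calI)$-tasks performed by $w_1$ or $w_2$ in week 1 are $i_1, i_2$ themselves; both of their first occurrences start at time at most $1 < t$, and since $w_1, w_2$ are idle at $t$ these occurrences must already have ended by $t$, so the merge leaves their assignment unchanged. Second, $\varphi_{\tilde f, 2}|_{U(\calI)} = \tau \circ \varphi_{f, 2}|_{U(\calI)}$, where $\tau$ is the transposition $(w_1\, w_2)$ in $[q]$, because every $U(\calI)$-task in week 2 ends at time $e_i + 2 > 2 \geq t$ and is therefore subject to the worker swap. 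Combining these two facts and using $\varphi_{f,1}^{-1}(w_k) = i_k$ yields the identity.

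The conclusion then rests on the classical fact that multiplying a permutation by the transposition of two elements lying in distinct cycles merges those cycles into a single one, decreasing the total cycle count by exactly one. Since $i_1$ and $i_2$ lie in distinct cycles of $\sigma_f$ by the hypothesis underlying the merge operation, this is exactly the configuration we need. The main obstacle is not the short algebraic core but rather the careful verification of the two restriction identities above and the bookkeeping of representative choices in $\overline{\calF}_{\calI}$; the latter is manageable once one observes that the graph only sees $\sigma_g|_{U(\calI)}$ and that we are free to choose representatives of each week-1 class as we wish.
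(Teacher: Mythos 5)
Your proof is correct and is essentially the paper's own argument in different packaging: the identity $\sigma_{\tilde f}=\sigma_f\circ(i_1\,i_2)$, obtained from $\varphi_{\tilde f,1}=\varphi_{f,1}$ and $\varphi_{\tilde f,2}=\tau\circ\varphi_{f,2}$ on $U(\calI)$, is exactly the paper's observation that the arcs leaving $i_1$ and $i_2$ swap their heads ($i_1\to i_2'$, $i_2\to i_1'$) while all other arcs labeled with the representative are unchanged. Invoking the classical fact that composing with a transposition of elements in distinct cycles merges them is a clean way to finish, and your remark that the graph only depends on the (broad) first-week assignment, shared by $f$, $\tilde f$, and their representatives, correctly handles the representative bookkeeping.
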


\begin{proof}
    Suppose that the merge operation performed to obtain $\tilde f$ from $f$ is done at time $t$ between the tasks $i_1$ and $i_2$.
    Consider the arc of tail $i_1$ (resp.\ $i_2$) labeled with the representative of $f$ in $D^{\calI, \overline{\calF}_{\calI}}$ and denote by $i_1'$ (resp.\ $i_2'$) the head of this arc. We have $\tilde f(i_1,1)=\tilde f(i_2',2)$ using the second and third cases of the definition of $\tilde f$ . Thus the arc leaving $i_1$ labeled with the representative of $\tilde f$ in $\overline{\calF}_{\calI}$ has head $i'_2$. Similarly, with the second and fourth cases, we have $\tilde f(i_2,1)=\tilde f(i_1',2)$. Thus the arc leaving $i_2$ labeled with the representative of $\tilde f$ in $\overline{\calF}_{\calI}$ has head $i'_1$. Consider the arcs different from the two going from $i_1$ to $i_2'$ and from $i_2$ to $i_1'$ and labeled with the representative of $\tilde f$ in $\overline{\calF}_{\calI}$. Each of these arcs is parallel to an arc labeled with $f$, according to the first case of the definition of $\tilde f$. Therefore the number of cycles formed by the arcs labeled with the representative of $\tilde f$ in $\overline{\calF}_{\calI}$ is smaller by one unit than for the representative of $f$. 
\end{proof}

For the next two proofs (and a few other times), the following notation will be useful. For a feasible assignment $f$ and an integer $r$, define $u_{f,r}(i)$ for $i\in[n]$ as the unique task in $U(\calI)$ whose $r$th occurrence is assigned to the same worker as the $r$th occurrence of $i$. Formally, $u_{f,r}(i)$ is the unique task in $U(\calI)$ such that $f(u_{f,r}(i),r) = f(i,r)$.

The following two lemmas will be useful to establish Theorems~\ref{thm:cns} and~\ref{thm:algo}.
\begin{lemma}\label{lem:big-cycle}
    Let $\calI$ be an instance such that $|U(\calI)|=q$.
    Suppose that $D^{\calI, \overline{\calF}_{\calI}}$ is Eulerian.
    If a feasible assignment cannot be subject to a merge operation, then the arcs labeled with its representative in $\overline{\calF}_{\calI}$ form a Hamiltonian cycle.
\end{lemma}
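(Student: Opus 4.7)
The plan is to establish the contrapositive: if the arcs labeled with the representative of $f$ in $D^{\calI, \overline{\calF}_{\calI}}$ decompose into at least two vertex-disjoint cycles, then $f$ admits a valid merge operation. So assume the $f$-arcs decompose into cycles $C_1, \ldots, C_k$ with $k \geq 2$. Since $D$ is Eulerian and therefore weakly connected, but the $q$ arcs labeled by $f$'s representative alone leave the underlying undirected graph split into $k$ components, some arc of $D$ labeled by a different assignment $g \in \overline{\calF}_{\calI}$ must have endpoints in two distinct $f$-cycles. Pick such an arc $(i_a, i_b)$ with $i_a \in C_a$, $i_b \in C_b$ and $a \neq b$, and set $j^g := g(i_a, 1) = g(i_b, 2)$.

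The feasibility of $g$ applied to worker $j^g$ forces the two intervals $[s_{i_a}+1, e_{i_a}+1)$ and $[s_{i_b}+2, e_{i_b}+2)$ to be disjoint, hence $e_{i_a}+1 \leq s_{i_b}+2$. Let $I := [e_{i_a}+1, s_{i_b}+2] \subseteq (1, 2]$; within $I$ in $g$, the worker $j^g$ only handles the intermediate tasks $X := \{i \notin U(\calI) : g(i, 1) = j^g\}$, whose total length is at most $|I|$. Now turn to $f$: set $j_1 := f(i_a, 1) \in J_a$ and $j_2 := f(i_b, 2) \in J_b$, where $J_m := \{f(i, 1) : i \in C_m\}$. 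The task $i_b^- := \varphi_{f,1}^{-1}(j_2)$ belongs to $C_b$, so $(i_a, i_b^-)$ is a valid cross-cycle pair for a merge. I would then argue that some $t \in I$ makes $j_1$ and $j_2$ simultaneously idle in $f$: the tasks $X$ together with the transitional roles played by $j^g$ in $g$ all fit under a single worker's schedule in $g$, so in $f$ their redistribution among several workers forces the idle periods of $j_1$ and $j_2$ within $I$ to overlap. A careful work-balance or pigeonhole argument on $I$, combined with the explicit endpoints $e_{i_a}+1$ and $s_{i_b}+2$ of $I$, should pinpoint such a $t$.

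Performing the merge of $i_a$ with $i_b^-$ at time $t$ yields, by Lemma~\ref{lem:variant}, a feasible assignment with strictly fewer cycles, contradicting the merge-freeness of $f$. Hence $k = 1$ and the $f$-arcs form a single Hamiltonian cycle. The main obstacle is the simultaneous-idleness assertion; in particular, the degenerate case $|I| = 0$ (i.e., $e_{i_a} = s_{i_b} + 1$, so $j^g$ in $g$ transitions directly from $i_a$ to $i_b$'s second occurrence with no intermediate tasks) requires either choosing a different cross-cycle arc or replacing $i_b^-$ by another representative of $C_b$.
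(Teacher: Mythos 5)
Your overall plan---take the contrapositive, use the weak connectivity of $D^{\calI, \overline{\calF}_{\calI}}$ to find an arc labeled with some $g$ joining two distinct $f$-cycles, and convert it into a merge operation---is the natural one, and it is essentially the scheme the authors say they had in mind but could not make explicit (see the remark preceding their proof). The crux, which you flag yourself, is the simultaneous-idleness assertion, and this is a genuine gap rather than a routine verification. From $g(i_a,1)=g(i_b,2)$ you only learn that the occurrences $[s_{i_a}+1,e_{i_a}+1)$ and $[s_{i_b}+2,e_{i_b}+2)$ are disjoint; nothing forces the workers $j_1=f(i_a,1)$ and $j_2=f(i_b,2)$ of the \emph{other} assignment $f$ to have intersecting idle periods inside $I=[e_{i_a}+1,\,s_{i_b}+2]$. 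The tasks occupying $j_1$ and $j_2$ during $I$ in $f$ are, in $g$, handled by entirely different workers, so there is no work-balance or pigeonhole constraint tying their idle times to the single worker $j^g$ of $g$: it is perfectly possible that $j_1$ is busy on an early portion of $I$ and $j_2$ on the complementary late portion (or vice versa), so that at every point of $I$ at least one of them is busy. In such a situation a merge may well still exist, but between other tasks, at other times, or between other pairs of cycles---which is exactly why a local argument pinned to the chosen arc $(i_a,i_b)$ and the interval $I$ cannot close the proof. The degenerate case $|I|=0$ that you mention is only a symptom of this deeper problem.

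The paper avoids the issue with a global induction instead of a local pigeonhole. It sorts the tasks by the shifted starting times $\tilde s_i$ and proves, for \emph{every} feasible assignment $g$ and every task $i$, that $u_{f^\star,1}(i)$ and $\rho_g(i)$ lie on a common cycle labeled with $f^\star$. The inductive step is a counting argument over \emph{all} workers outside the cycle $C_{\ell+1}$: merge-freeness forces each worker $f^\star(i,1)$ with $i\in U(\calI)\setminus C_{\ell+1}$ to be busy at time $\tilde s_{i_{\ell+1}}+1$, and an injective map $\chi$ built from the induction hypothesis transfers this to $g$, showing that none of the corresponding workers of $g$ can be performing $i_{\ell+1}$ at that time, whence $\rho_g(i_{\ell+1})\in C_{\ell+1}$. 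Applying this property to the label $g$ of a cross-cycle arc then yields the contradiction. To salvage your approach you would need an argument of this global, all-workers kind rather than a statement about the two workers $j_1,j_2$ on the single interval $I$; as written, the decisive step of your proof is missing.
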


In essence, the proof of this lemma assumes that there are more than two cycles and identifies a common idle time that can be used to perform a merge operation. For some technical reasons, we were however not able to make this scheme as explicit as we wished in the proof.

\begin{proof}[Proof of Lemma~\ref{lem:big-cycle}]
    Take a feasible assignment that cannot be subject to a merge operation and denote by $f^\star$ its representative in $\overline{\calF}_{\calI}$. For a feasible assignment $g$, denote by $\rho_g(i)$ the task in $U(\calI)$ whose first occurrence is assigned to the worker who starts the task $i$ on the first week. Formally,
    \[
    \rho_g(i) = \begin{cases}
        (\varphi_{g,1}^{-1}\circ\varphi_{g,2})(i)&\text{ if }i\in U(\calI),\\
        u_{g,1}(i)&\text{ if }i\notin U(\calI).
    \end{cases}
    \]
    We show now the following property: {\em For every feasible assignment $g$ and task $i\in [n]$, there is a cycle labeled with $f^\star$ containing the vertices $u_{f^\star,1}(i)$ and $\rho_g(i)$ simultaneously.} 
    Once this property is shown, we finish the proof as follows. Suppose by contradiction that there are at least two cycles labeled with $f^\star$. Choose two such cycles with an arc $a$ of $D^{\calI, \overline \calF_\calI}$ connecting them. Such cycles and arc exist because $D^{\calI, \overline \calF_\calI}$ is Eulerian. Let $g$ be the assignment labeling $a$, and $i'$ and $i$ be respectively the tail and the head of $a$. Using the property above, there is a cycle labeled with $f^\star$ containing the vertices $u_{f^\star,1}(i) = i$ and $\rho_g(i) = i'$ simultaneously. This contradicts the fact that $i$ and $i'$ belong to different cycles labeled with $f^\star$. 

    We will proceed now by induction to prove the property. To do so, for each task $i\in [n]$, we define 
    \[
    \tilde s_i \coloneqq \begin{cases}
        s_i+1 &\text{ if } i\in U(\calI),\\
        s_i &\text{ if } i\notin U(\calI),\\
    \end{cases}
    \quad\text{ and }\quad
    \tilde e_i \coloneqq \begin{cases}
        e_i+1 &\text{ if } i\in U(\calI),\\
        e_i &\text{ if } i\notin U(\calI).\\
    \end{cases}
    \]
    We sort the tasks $i_1,\ldots, i_n$ by ascending order of $\tilde s_i$. Let $g$ be a feasible assignment, and let us show by induction on $\ell$ that there is a cycle labeled with $f^\star$ containing the vertices $u_{f^\star,1}(i_\ell)$ and $\rho_g(i_\ell)$ simultaneously.
    
    First, let us prove the property for $\ell=1$.
    By the definition of $f^\star$, every task $i'\in U(\calI)$ verifying $[s_{i'},e_{i'})\cap[\tilde s_{i_1},\tilde e_{i_1}) = \varnothing$ is in the same cycle as $u_{f^\star,1}(i_1)$ (otherwise, a merge operation could be performed right before time $\tilde s_{i_1}+1$ between the tasks $u_{f^\star,1}(i_1)$ and $i'$). Since $g$ is a feasible assignment, we have $[s_{\rho_g(i_1)},e_{\rho_g(i_1)})\cap[\tilde s_{i_1},\tilde e_{i_1}) = \varnothing$. Therefore there is a cycle labeled with $f^\star$ containing the vertices $u_{f^\star,1}(i_1)$ and $\rho_g(i_1)$ simultaneously.
    
    Suppose now that the property holds up to $\ell\geq 1$. Let us prove the property for $\ell+1$.
    Write $C_{\ell+1}$ the cycle labeled with $f^\star$ containing the vertex $u_{f^\star,1}(i_{\ell+1})$. 
    For $i\in U(\calI)\setminus C_{\ell+1}$ define $\chi(i)$ as follows. In the assignment $f^\star$, the worker $f^\star(i,1)$ is not idle at time $\tilde s_{i_{\ell+1}} +1$ (otherwise, a merge operation could be performed right before time $\tilde s_{i_{\ell+1}}+1$ between the tasks $u_{f^\star,1}(i_{\ell+1})$ and $i$). Thus, the worker $f^\star(i,1)$ is busy, performing either:
    \begin{enumerate}
    \item \label{item:chi_1}the first occurrence of the task $i \in U(\calI)$. 
    \item \label{item:chi_2}the first occurrence of a task $i_k\notin U(\calI)$. Note that $k\leq \ell$ in that case.
    \item \label{item:chi_3}the second occurrence of a task $i_k\in U(\calI)$. Note that $k\leq \ell$ in that case.
    \end{enumerate}
    Set
    \[
    \chi(i)\coloneqq \begin{cases}
        i&\text{in case~\eqref{item:chi_1},}\\
        \rho_g(i_k)&\text{in cases~\eqref{item:chi_2} and ~\eqref{item:chi_3}.}
    \end{cases}
    \]
    An illustration is given in Figure~\ref{fig:chi}. Notice that $g(\chi(i),1)$ performs a task different from $i_{\ell+1}$ at time $\tilde s_{\ell+1}$ in $g$.

    The map $\chi$ defines a self-map of $U(\calI)\setminus C_{\ell+1}$: in case~\eqref{item:chi_1}, the task $\chi(i) = i$ does not belong to $C_{\ell+1}$ and in cases~\eqref{item:chi_2} and \eqref{item:chi_3}, by induction, the task $\chi(i) = \rho(i_k)$ belongs to the same cycle as $u_{f^\star,1}(i_k) = i$, which is different than $C_{\ell+1}$ by definition of $i$. We claim that the map $\chi$ is actually injective. Let $i\neq i'\in U(\calI)\setminus C_{\ell+1}$. As noted above, with tasks chosen this way, the workers $f^\star(i,1)$ and $f^\star(i',1)$ are busy at time $\tilde s_{i_{\ell+1}}+1$ in the assignment $f^\star$. Let us check that $\chi(i)\neq \chi(i')$. 
    
    Suppose first that $\chi(i) = i$ and $\chi(i')=i'$ (case~\eqref{item:chi_1} for $i$ and $i'$). Since $i\neq i'$, we have $\chi(i)\neq \chi(i')$.
    Suppose now that $\chi(i) = \rho_g(i_k)$ (case~\eqref{item:chi_2} for $i$) and $\chi(i')=i'$ (case~\eqref{item:chi_1} for $i'$). The tasks $i_k$ and $i'$ overlap because $f^\star(i,1)$ and $f^\star(i',1)$ are busy at time $\tilde s_{i_{\ell+1}}+1$ in $f^\star$ with the first occurrences of respectively $i_k$ and $i'$. Thus $\rho_g(i_k)\neq i'$, i.e., $\chi(i)\neq \chi(i')$.
    Suppose then that $\chi(i) = \rho_g(i_k)$ (case~\eqref{item:chi_2} for $i$) and $\chi(i')=\rho_g(i_{k'})$ (case~\eqref{item:chi_2} for $i'$). The tasks $i_k$ and $i_{k'}$ overlap because $f^\star(i,1)$ and $f^\star(i',1)$ are busy at time $\tilde s_{i_{\ell+1}}+1$ in $f^\star$ with the first occurrences of respectively $i_k$ and $i_{k'}$. The images of overlapping tasks by $\rho_g$ are distinct, and thus $\chi(i)\neq \chi(i')$.
    When $\chi(i)$ or $\chi(i')$ is defined according to case~\eqref{item:chi_3}, the proof of the relation $\chi(i)\neq \chi(i')$ follows along the same lines as for case~\eqref{item:chi_2}.
    
    The map $\chi$ is then bijective, which implies $\{g(\chi(i),1)\colon i\in U(\calI)\setminus C_{\ell+1}\} = \{g(i,1)\colon i\in U(\calI)\setminus C_{\ell+1}\}$. As noted above, the worker $g(\chi(i),1)$ performs a task different from $i_{\ell+1}$ at time $\tilde s_{i_{\ell+1}}+1$ in the assignment $g$.
    Thus, no worker in $\{g(i,1)\colon i\in U(\calI)\setminus C_{\ell+1}\}$ performs a task $i_{\ell+1}$ at time $\tilde s_{i_{\ell+1}}+1$ in the assignment $g$. This implies $\rho_g(i_{\ell+1})\in C_{\ell+1}$, which concludes the proof of the property.
\end{proof}

\begin{figure}
\begin{subfigure}{\textwidth}
    \includegraphics[width=\textwidth]{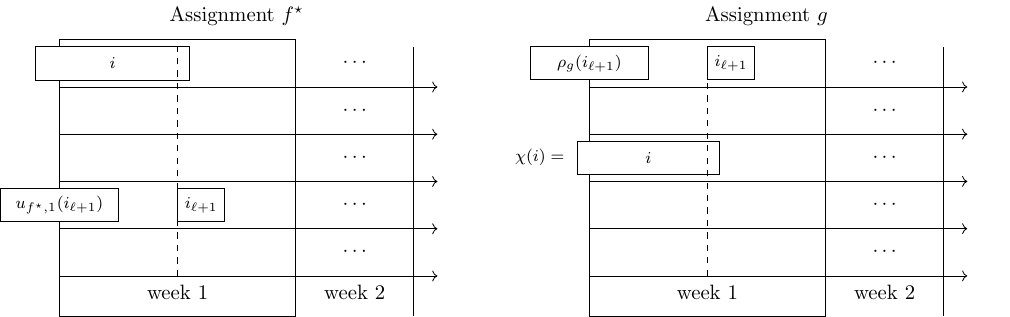}
    \caption{An illustration of the map $\chi$ in the case \eqref{item:chi_1}: $\chi(i)=i$.}
\end{subfigure}
\begin{subfigure}{\textwidth}
    \includegraphics[width=\textwidth]{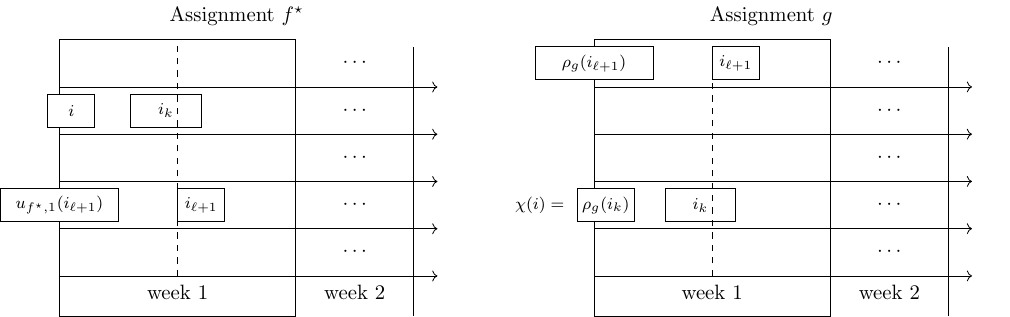}
    \caption{An illustration of the map $\chi$ in the cases~\eqref{item:chi_2} and~\eqref{item:chi_3}: $\chi(i)=\rho_g(i_k)$.}
\end{subfigure}
    \caption{An illustration of the map $\chi$ for the different cases~\eqref{item:chi_1},~\eqref{item:chi_2}, and~\eqref{item:chi_3}.}
    \label{fig:chi}
\end{figure}

\begin{lemma}\label{lem:balanced_q}
    Let $\calI$ be an instance such that $|U(\calI)|=q$ and $\calF$ be a set of feasible assignments for $\calI$.
    If $D^{\calI, \calF}$ is Eulerian, then there exists a periodic balanced feasible assignment for $\calI$ with period $q$.
\end{lemma}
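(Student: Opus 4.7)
The plan is to use Lemma~\ref{lem:big-cycle} to produce a single feasible assignment $f^\star\in\overline{\calF}_{\calI}$ whose arcs in $D^{\calI,\overline{\calF}_{\calI}}$ form a Hamiltonian cycle, and then to obtain the desired periodic balanced assignment of period $q$ by iterating $f^\star$ across all weeks via Lemma~\ref{lem:seq}.

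First I would argue that $D^{\calI,\overline{\calF}_{\calI}}$ is itself Eulerian. By the universality of $\overline{\calF}_{\calI}$, every arc of $D^{\calI,\calF}$ is paralleled by an arc of $D^{\calI,\overline{\calF}_{\calI}}$, so weak connectivity transfers from the former to the latter; combined with the fact that each vertex of $D^{\calI,\overline{\calF}_{\calI}}$ has in- and out-degree equal to $|\overline{\calF}_{\calI}|$, this yields the Eulerian property and places us in the setting of Lemma~\ref{lem:big-cycle}. Starting from any feasible assignment and applying merge operations as long as possible, Lemma~\ref{lem:variant} guarantees that the number of cycles of the current representative strictly decreases at each step, so the procedure terminates after at most $q-1$ merges at a feasible assignment admitting no further merge. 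By Lemma~\ref{lem:big-cycle}, the arcs labeled by its representative $f^\star\in\overline{\calF}_{\calI}$ then form a Hamiltonian cycle in $D^{\calI,\overline{\calF}_{\calI}}$; equivalently, the permutation $\tau\coloneqq\varphi_{f^\star,2}^{-1}\circ\varphi_{f^\star,1}$ of $U(\calI)$ is a single $q$-cycle, and therefore so is its conjugate $\sigma\coloneqq\varphi_{f^\star,2}\circ\varphi_{f^\star,1}^{-1}$ acting on $[q]$.

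Next I would apply Lemma~\ref{lem:seq} to the constant sequence $f_r\coloneqq f^\star$ for every $r$, with $\pi_1\coloneqq\mathrm{id}_{[q]}$. The recursion defining the $\pi_r$'s collapses to $\pi_{r+1}=\pi_r\circ\sigma$, giving $\pi_r=\sigma^{r-1}$ and hence $\pi_{r+q}=\pi_r$ since $\sigma$ has order $q$. The resulting feasible assignment $g(i,r)=\pi_r\bigl(f^\star(i,1)\bigr)$ therefore satisfies $g(i,r+q)=g(i,r)$ for all $i$ and $r$, i.e., is periodic with period $q$. For balancedness, for each fixed $i\in[n]$ and $j\in[q]$ the equation $g(i,r)=j$ reduces to $\sigma^{r-1}(f^\star(i,1))=j$; since $\sigma$ is a $q$-cycle on $[q]$, this admits exactly one solution in every window of $q$ consecutive weeks, giving the long-term frequency $1/q$ prescribed by~\eqref{eq:limit}.

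The main obstacle is packaged inside Lemma~\ref{lem:big-cycle}, which supplies the crucial Hamiltonian-cycle configuration through the merge procedure; once this is granted, both the construction of $g$ and the verification of its feasibility (via Lemma~\ref{lem:seq}), its periodicity, and its balancedness reduce to a short computation with the $q$-cyclic permutation $\sigma$.
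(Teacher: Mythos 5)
Your proposal is correct and follows essentially the same route as the paper's proof: transfer the Eulerian property to $D^{\calI,\overline{\calF}_{\calI}}$, apply merge operations until Lemma~\ref{lem:big-cycle} yields an assignment $f^\star$ labeling a Hamiltonian cycle, and then feed the constant sequence $f^\star,f^\star,\ldots$ into Lemma~\ref{lem:seq} to get a feasible assignment whose periodicity and balancedness follow from a modulo-$q$ count. Your bookkeeping via the $q$-cycle $\sigma=\varphi_{f^\star,2}\circ\varphi_{f^\star,1}^{-1}$ and $\pi_r=\sigma^{r-1}$ is just a streamlined version of the paper's computation with the cycle indices $i_1,\ldots,i_q$.
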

\begin{proof}
    Suppose that $D^{\calI, \calF}$ is Eulerian. Note that $\calF$ is then non-empty (even if $q=1$) and thus that $\overline \calF_\calI$ is non-empty as well. The graph $D^{\calI, \overline\calF_{\calI}}$ contains a subgraph isomorphic to $D^{\calI, \calF}$ and they have the same vertex set. The graph $D^{\calI, \overline\calF_{\calI}}$ is thus weakly connected, and thus Eulerian as noted in Section~\ref{sec:digraph}.

Pick any feasible assignment in $\overline{\calF}_{\calI}$ and consider the feasible assignment obtained from it by performing merge operations as long as possible. According to Lemma~\ref{lem:variant}, at most $q$ such operations can be performed, and thus the process finishes. Lemma~\ref{lem:big-cycle} implies then that there exists a feasible assignment $g\in \overline{\calF}_{\calI}$ labeling arcs of a Hamiltonian cycle $i_1, \ldots, i_q$ in $D^{\calI, \overline{\calF}_{\calI}}$. By definition of $D^{\calI, \overline{\calF}_{\calI}}$, we have $g(i_{k},1)=g(i_{k+1},2)$ for all $k\in[q]$, which can be rewritten as $\varphi_{g,1}(i_{k}) = \varphi_{g,2}(i_{k+1})$ (where $i_{k+q} \coloneqq i_k$).

Consider now the permutations $\pi_r$ as defined in Section~\ref{sec:feasible_permutation} for the infinite sequence $g,g,\ldots$ and set $g'(i,r)\coloneqq (\pi_r\circ g)(i,1)$ for every $i\in[n]$ and every $r\in \Z_{>0}$. According to Lemma~\ref{lem:seq}, this assignment $g'$ is feasible. 

We first show that $g'$ is periodic of period $q$. We have, for all $k\in[q]$ and $r\in \Z_{>0}$,
$$\pi_{r}(\varphi_{g,1}(i_{k})) =  \pi_{r}(\varphi_{g,2}(i_{k+1})) = (\pi_{r+1}\circ\varphi_{g,1}\circ\varphi_{g,2}^{-1})(\varphi_{g,2}(i_{k+1})) = \pi_{r+1}(\varphi_{g,1}(i_{k+1})) \, .$$
By a direct induction, we get $\pi_r(\varphi_{g,1}(i_k))=\pi_{r+h}(\varphi_{g,1}(i_{k+h}))$, for all $r,h\in\Z_{>0}$ and $k\in[q]$. Thus,
\begin{equation}\label{equ:induction}
g'(i_k,r)=\pi_r(g(i_k,1)) = \pi_r(\varphi_{g,1}(i_k))=\pi_{r+h}(\varphi_{g,1}(i_{k+h})) = \pi_{r+h}(g(i_{k+h},1)) = g'(i_{k+h},r+h)\ .
\end{equation}

In particular, we have $g'(i_k,r)=g'(i_{k+q},r+q)=g'(i_k,r+q)$ for every $k\in[q]$ and $r\in \Z_{>0}$.
Therefore, for all $i\in [n]$ and $r\in \Z_{>0}$, we have
\begin{align*}
g'(i,r) &= \pi_r(g(i,1)) = \pi_r(g(u_{g,1}(i),1)) =g'(u_{g,1}(i),r) = g'(u_{g,1}(i),r+q) \\&= \pi_{r+q}(g(u_{g,1}(i),1)) = \pi_{r+q}(g(i,1)) = g'(i,r+q) \, ,   
\end{align*}
as desired.

We then show that $g'$ is balanced. Consider $i \in [n]$ and $j \in [q]$. 
Define $i_k \coloneqq u_{g,1}(i)$ and $i_\ell \coloneqq \varphi_{g',1}^{-1}(j)$. We have 
\[
g'(i,r) = \pi_r(g(i,1)) = \pi_r(g(u_{g,1}(i),1))= g'(u_{g,1}(i),r)= g'(i_k,r) = g'(i_{k-r+1},1) \, ,
\]
where equation~\eqref{equ:induction} is used to get the last equality. Since $\varphi_{g',1}$ is bijective, $g'(i,r) = j$ if and only $i_{k-r+1} = i_\ell$, i.e., if and only if $r \equiv k - \ell + 1 \pmod q$.
From this, we get
\begin{align*}
|\{ r \in [t] \colon g'(i,r) = j \}| &= \sum_{r=1}^t \un(g'(i,r)=j) = \sum_{\substack{r \in [t] \\ r \equiv k-\ell+1 \hspace{-2mm}\pmod q}} \un(g'(i,r)=j) \\ 
&=  |\{r \in [t] \colon r \equiv k-\ell+1 \hspace{-2mm}\pmod q\}| \, ,
\end{align*}
where $\un(\cdot)$ is the indicator function.
Thus, $\lceil t/q \rceil \geq \bigl|\{r \in [t] \colon g'(i,r) = j\}\bigl| \geq \lfloor t/q \rfloor$ holds. Passing to the limit shows that $\lim_{t\to\infty} \frac1t\bigl|\{r \in [t] \colon g'(i,r) = j\}\bigl|$ exists and is equal to $\frac1q$. The assignment $g'$ is thus balanced.
\end{proof}

\begin{proof}[Proof of $\ref{cond1}\Rightarrow\ref{cond3}$ in Theorem~\ref{thm:cns}] 
Consider an instance $\calI$ (without schedules), and denote by $\calI'$ the associated instance defined as in Section~\ref{sec:transfo}. Suppose we are given a feasible assignment for $\calI$ with a worker performing each task at least once.
According to Lemma~\ref{lem:connected_unconstrained}, since $\overline \calF_{\calI'}$ is a universal set of feasible assignments as noted at the beginning of this section, the graph $D^{\calI',\overline\calF_{\calI'}}$ is Eulerian. Using Lemma~\ref{lem:balanced_q}, there exists a periodic balanced feasible assignment for $\calI'$ with period $q$. Thus, according to Lemma~\ref{lem:transfo_balanced}, there exists a periodic balanced feasible assignment for $\calI$ with period $q$.
\end{proof}

\section{Proof of Theorem~\texorpdfstring{\ref{thm:algo}}{1.2}}

In this section, we prove Theorem~\ref{thm:algo} for an instance $\calI=\left(\bigl([s_i,e_i)\bigl)_{i\in [n]},q\right)$ without schedules. The proof of this theorem combines the results of Section~\ref{sec:thm1} and the following lemmas. 

\begin{lemma}\label{lem:polynomial_bigq}
    Let $\calI$ be an instance such that $q\geq 2n$. Then there always exists a periodic balanced feasible assignment with period $q$, and a closed formula of such an assignment can be computed in polynomial time.
\end{lemma}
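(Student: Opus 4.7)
The plan is to exhibit an explicit cyclic assignment and verify the three required properties directly. My construction: label the tasks $1, \ldots, n$ arbitrarily, set $a_i \coloneqq 2(i-1)$ for each $i \in [n]$, and define
\[
f(i, r) \coloneqq \bigl((a_i + r - 1) \bmod q\bigr) + 1.
\]
The hypothesis $q \geq 2n$ will be used to keep the offsets inside $\{0, \ldots, q-1\}$ and, more importantly, to keep the pairwise differences $a_i - a_{i'}$ away from $\{-1, 0, 1\}$ modulo $q$.

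Periodicity with period $q$ will be immediate from $f(i, r+q) = f(i, r)$. For balance, for fixed $i \in [n]$ and $j \in [q]$, I will observe that $f(i,r) = j$ if and only if $r \equiv j - a_i \pmod q$, so the count of such $r$ in $[t]$ lies in $\{\lfloor t/q \rfloor, \lceil t/q \rceil\}$, yielding the limit $1/q$ required by~\eqref{eq:limit}.

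The main step will be feasibility. Given $i \neq i'$ and $r, r'$ with $[s_i + r, e_i + r) \cap [s_{i'} + r', e_{i'} + r') \neq \varnothing$, the assumption $e_i - s_i \leq 1$ forces $|r - r'| \leq 1$, and $f(i,r) = f(i',r')$ reduces to the congruence $a_i - a_{i'} \equiv r' - r \pmod q$ with $r' - r \in \{-1, 0, 1\}$. Since $a_i$ and $a_{i'}$ are both even, $a_i - a_{i'}$ is an even integer, and $|a_i - a_{i'}| \leq 2(n-1)$. Hence for any $c \in \{-1, 0, 1\}$, $|a_i - a_{i'} - c| \leq 2n - 1 < q$, so a congruence $a_i - a_{i'} \equiv c \pmod q$ would force $a_i - a_{i'} = c$; but $c = 0$ is ruled out by the injectivity of $a$ (since $i \neq i'$) and $c = \pm 1$ is ruled out by parity.

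The closed formula together with the $n$ offsets $a_1, \ldots, a_n$ is clearly producible in polynomial time in $n + \log q$. I do not anticipate a real obstacle: the construction carries all the weight, and the bound $q \geq 2n$ is invoked exactly once, precisely to rule out the three congruences above. (The construction is essentially tight for this approach, since with $q = 2n - 1$ the difference $a_1 - a_n = -(2n-2) \equiv 1 \pmod q$ would become problematic.)
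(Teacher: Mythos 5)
Your proposal is correct and follows essentially the same route as the paper: an explicit cyclic formula assigning the $r$th occurrence of task $i$ to the worker indexed by $r$ plus an even offset $\pm 2i$ modulo $q$, with the hypothesis $q \geq 2n$ used exactly to rule out the conflict congruences $a_i - a_{i'} \equiv c \pmod q$ for $c \in \{-1,0,1\}$ via the parity and magnitude of the offsets. The periodicity, balance, and polynomial-time arguments also match the paper's.
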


\begin{proof}
    For $i\in[n]$ and $r\in \Z_{>0}$, define \begin{equation}\label{eq:closed_formula}
    f(i,r)\coloneqq \begin{cases}
        r_1+2-2i&\text{ if }0<r_1+2-2i,\\
        r_1+2-2i+q&\text{ if }r_1+2-2i\leq 0,\\
    \end{cases}
    \end{equation}
    where $r_1$ is the remainder of the Euclidean division of $r$ by $q$. This ensures that $r_1 \leq q-1$, and thus that $q-1\geq r_1+2-2i$. So when $0<r_1+2-2i$, we have $r_1+2-2i\in [q]$. Moreover, we have $r_1+2-2i\geq 2-2n\geq 2-q$. So when $r_1+2-2i\leq 0$, we have $r_1+2-2i+q\in [q]$. Therefore in both cases, $f(i,r)\in[q]$ holds, which makes $f$ an assignment. Let us show that $f$ is balanced, feasible, and periodic with period $q$.

    First, we show that $f$ is feasible. Let $i,i'\in [n]$ and $r,r'\in\Z_{>0}$ be such that $f(i,r) = f(i',r')$. Suppose first $|r-r'| > 1$. Then $[s_{i}+r,e_{i}+r) \cap [s_{i'}+r',e_{i'}+r')$ is necessarily empty as noted in Section~\ref{sec:intro}. Suppose now that $|r-r'|\leq 1$. Without loss of generality, suppose that $r \in \{r',r'+1\}$. Let $r_1$ and $r_1'$ be the remainders of the Euclidean divisions of $r$ and $r'$ by $q$. Using~\eqref{eq:closed_formula}, we have then necessarily $r_1 - 2i \equiv r'_1 - 2i' \pmod q$, which can be written $r-2i \equiv r'-2i' \pmod q$. We cannot have $r=r'+1$ because otherwise the equality would rewrite $2i-1 \equiv 2i' \pmod q$, which would imply $2i-1=2i'$, which is impossible. Thus, $r=r'$ and since $2i, 2i'\in[q]$, we have $i=i'$. Therefore, the assignment $f$ is feasible.

    Second, we show that $f$ is periodic with period $q$. Let $i\in[n]$ and $r\in\Z_{>0}$. Writing $r=r_2q+r_1$ the Euclidean division of $r$ by $q$, we have $r+q=(r_2+1)q+r_1$ which is the Euclidean division of $r+q$ by $q$. Therefore $f(i,r)=f(i,r+q)$ which means that $f$ is periodic with period $q$.

    Finally, we show that the assignment $f$ is balanced. Let $i\in [n]$ and $j\in [q]$. We claim that there is a unique $r\in \{q,\ldots,2q-1\}$ such that $f(i,r)=j$. Indeed, we have $r = r_1+q$ where $$r_1 = \begin{cases}
        j+2i-2 &\text{ if } j+2i-2 \in \{0,\ldots, q-1\},\\
        j+2i-2-q &\text{ if } j+2i-2 \in \{q,\ldots, 2q-1\}.\\
    \end{cases}$$ 
    Since $f$ is periodic with period $q$, this shows that $f$ is balanced.

    Furthermore, since the Euclidean division can be performed in polynomial time, the closed formula~\eqref{eq:closed_formula} gives the desired property.
\end{proof}
\begin{figure}
    \centering
    \includegraphics[width=0.4\textwidth]{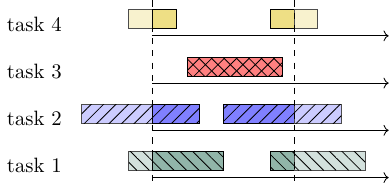}
    
    \vspace{5mm}
    \includegraphics[width=\textwidth]{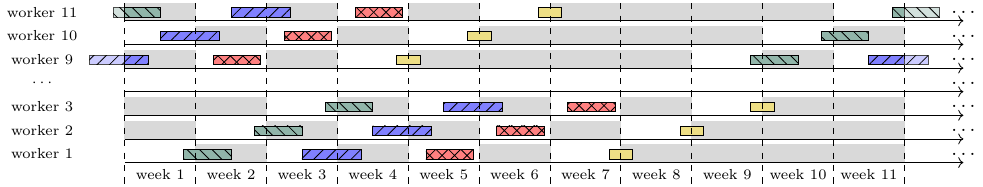}
    \caption{An illustration of the assignment given by equation~\eqref{eq:closed_formula} in the proof of Lemma~\ref{lem:polynomial_bigq}. The instance has $n=4$ tasks (with intervals illustrated above) and $q=11$ workers. The gray areas are weeks on which no tasks are started.}
    \label{fig:enter-label}
\end{figure}

For the next two lemmas, the first week is understood in a broad manner, including the second occurrences of the tasks of $U(\calI)$.

\begin{lemma}\label{lem:interval_coloring}
    Deciding whether there exists a feasible assignment for an instance $\calI$ and computing the first week of such an assignment if it exists can be done in polynomial time.
\end{lemma}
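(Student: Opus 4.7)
My plan is to reduce the question to coloring an interval graph of polynomial size. Let $G_\calI$ be the interval graph whose vertices are the intervals $[s_i+1, e_i+1)$ for $i \in [n]$ together with $[s_i+2, e_i+2)$ for $i \in U(\calI)$, and whose edges join pairs of overlapping intervals. I would first prove that $\calI$ admits a feasible assignment if and only if $G_\calI$ is $q$-colorable, and that in this case any proper $q$-coloring $c$ yields the first week (broadly understood) of such an assignment via $f(i,1) \coloneqq c([s_i+1, e_i+1))$ for $i \in [n]$ and $f(i, 2) \coloneqq c([s_i+2, e_i+2))$ for $i \in U(\calI)$.

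The forward implication follows directly from the non-overlapping property~\eqref{eq:feasible}. For the converse, starting from a proper $q$-coloring $c$ of $G_\calI$, I would extend $c$ to a full feasible assignment as follows. For every $i \in U(\calI)$, the intervals $[s_i+1, e_i+1)$ and $[s_i+2, e_i+2)$ respectively contain the times $t=1$ and $t=2$, so the first occurrences of the tasks in $U(\calI)$ form a clique in $G_\calI$, and so do their second occurrences. Hence the map sending $c([s_i+1, e_i+1))$ to $c([s_i+2, e_i+2))$ for $i \in U(\calI)$ is an injection between two subsets of $[q]$ of equal size and thus extends to a permutation $\pi$ of $[q]$. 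Setting $f(i,r) \coloneqq \pi^{r-1}(c([s_i+1, e_i+1)))$ for every $(i,r) \in [n] \times \Z_{>0}$ yields an assignment whose first week coincides with $c$; to verify~\eqref{eq:feasible}, it suffices to handle two cases, same-week overlaps (which translate into edges among first-occurrence vertices) and cross-week overlaps (which force the later task to lie in $U(\calI)$ and translate into edges between a first-occurrence vertex and a second-occurrence vertex). Both cases follow directly from $c$ being a proper coloring and $\pi$ being a permutation.

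For the algorithmic part, $G_\calI$ has at most $2n$ vertices and is perfect (being an interval graph), so its clique number $\omega(G_\calI)$ equals its chromatic number and can be computed in polynomial time by a sweep line over the interval endpoints; when $\omega(G_\calI) \leq q$, the classical greedy coloring of intervals sorted by left endpoint produces a proper $\omega(G_\calI)$-coloring. The algorithm therefore constructs $G_\calI$, computes $\omega(G_\calI)$, answers ``no feasible assignment'' if $\omega(G_\calI) > q$, and otherwise outputs the greedy coloring as the first week. The only step requiring genuine care is the extension of a proper $q$-coloring of $G_\calI$ to a full feasible assignment via the permutation $\pi$; everything else is a routine application of standard results on interval graphs.
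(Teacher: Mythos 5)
Your proposal is correct and follows essentially the same route as the paper: build the interval graph on the first occurrences of all tasks together with the second occurrences of the tasks in $U(\calI)$ (your graph is just a translate of the paper's), reduce feasibility to proper $q$-coloring of this interval graph, and color it in polynomial time via the standard perfect-graph/greedy argument. The only difference is that you spell out, via the permutation $\pi$ and its powers, the extension of the colored first week to a full feasible assignment, a step the paper merely asserts; your verification of the two overlap cases is sound.
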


\begin{proof}
    Let $G$ be the interval graph built from all the intervals $[s_i,e_i)$ for $i \in [n]$ together with the intervals $[s_i+1,e_i+1)$ for $i \in U(\calI)$. Computing (if it exists) the first week of a feasible assignment $f$ is equivalent to computing (if it exists) a proper $q$-coloring of $G$, by identifying the colors as the workers. It is well known that this can be done in polynomial time. Note that a proper $q$-coloring of $G$ induces an assignment on the first week which can be extended into a feasible assignment. 
\end{proof}

The next lemma discusses the tractability of the merge operation, defined in Section~\ref{sec:thm1}.

\begin{lemma}\label{lem:merge_polynomial}
    Let $\calI$ be an instance such that $|U(\calI)|=q$ and $f$ be an arbitrary feasible assignment for $\calI$. Given a time $t\in (1,2]$ and two tasks $i_1,i_2\in U(\calI)$, deciding whether a merge operation can be performed on $f$ at time $t$ between $i_1$ and $i_2$ can be done in polynomial time. Moreover, when such an operation can be done, the restriction on the first week of the assignment resulting from such a merge operation can be computed in polynomial time as well.
\end{lemma}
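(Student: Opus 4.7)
The plan is to observe that every aspect of the merge operation — the distinctness-of-cycles test, the simultaneous-idleness test, and the resulting first-week assignment — is determined by the first-week restriction of $f$ (the values $f(i,1)$ for $i\in[n]$ together with $f(i,2)$ for $i\in U(\calI)$), which is of polynomial size and is the relevant part of the input.

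First I would test whether $i_1$ and $i_2$ lie in distinct cycles of $D^{\calI, \overline{\calF}_\calI}$ labeled with $f$. Since $|U(\calI)|=q$, the maps $\varphi_{f,1},\varphi_{f,2}\colon U(\calI)\to[q]$ are bijections that can be read off from the first-week data, and the arcs labeled with $f$ form precisely the functional graph of the map $\psi\coloneqq\varphi_{f,2}^{-1}\circ\varphi_{f,1}$ on $U(\calI)$. Iterating $\psi$ from $i_1$ for at most $q$ steps and reporting whether $i_2$ is visited performs this test in polynomial time.

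Next I would verify that the workers $j_1\coloneqq f(i_1,1)$ and $j_2\coloneqq f(i_2,1)$ are both idle at time $t\in(1,2]$. The only task-occurrences that can cover such a $t$ are week-1 occurrences of tasks $i\in[n]$ and week-2 occurrences of tasks $i\in U(\calI)$, and all of them already appear in the first-week data. Scanning this finite list, intersecting the corresponding intervals with $\{t\}$, and checking whether any containing interval is assigned to $j_1$ or $j_2$ runs in polynomial time. If either of these two tests fails, one outputs that the merge operation cannot be performed.

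If both tests succeed, I would compute the first week of $\tilde f$ by applying the four-case defining formula of the merge occurrence by occurrence: copy $f(i,r)$ unless the original worker lies in $\{j_1,j_2\}$ and $e_i+r>t$, in which case swap $j_1$ and $j_2$. Week-2 occurrences of tasks in $U(\calI)$ automatically satisfy $e_i+2>t$, so no data beyond the first week is needed. I foresee no genuine obstacle: the lemma is essentially the bookkeeping observation that each ingredient of the merge operation is local to the first-week data of $f$, and is therefore implementable in polynomial time.
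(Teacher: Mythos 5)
Your proposal is correct and follows essentially the same route as the paper: read the arcs labeled with $f$ off the first-week data (arc from $i$ to $i'$ iff $f(i,1)=f(i',2)$) to test whether $i_1,i_2$ lie in distinct cycles, check idleness of $f(i_1,1)$ and $f(i_2,1)$ at time $t$ against the polynomially many relevant occurrences, and then apply the defining formula of the merge to get the first week of $\tilde f$. Your version simply spells out the bookkeeping (the permutation $\varphi_{f,2}^{-1}\circ\varphi_{f,1}$ and the fact that only week-1 occurrences and week-2 occurrences of tasks in $U(\calI)$ can cover $t$) that the paper leaves implicit.
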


\begin{proof}
    Computing the cycles of in $D^{\calI,\overline\calF_\calI}$ labeled with $\bar f$ the representative of $f$ in $\overline\calF_\calI$ can be done in polynomial time in $O(q)$. Indeed, there is an arc from $i$ to $i'$ in $D^{\calI,\overline\calF_\calI}$ labeled with $\bar f$ if and only if $f(i,1)=f(i',2)$. 
    Checking whether a merge operation can be performed of $f$ at time $t$ between $i_1$ and $i_2$ is equivalent to checking if $f(i_1,1)$ and $f(i_2,1)$ are both idle at time $t$ and if $i_1$ and $i_2$ belong to distinct cycles. Moreover, it is clear from the definition of a merge operation that the first week of the resulting assignment can be computed in polynomial time.
\end{proof}

\begin{proof}[Proof of Theorem~\ref{thm:algo}]
    Let $\calI=\left(\bigl([s_i,e_i)\bigl)_{i\in [n]},q\right)$ be an instance (without schedules).
    The case $q\geq 2n$ is dealt with using Lemma~\ref{lem:polynomial_bigq}, which provides a closed formula for an assignment satisfying the desired properties. We can thus assume for the rest of the proof that $q\leq 2n-1$. 
    
    Transform $\calI$ into $\calI'=\left(\bigl([s_i,e_i)\bigl)_{i\in[n']}, q\right)$ as described in Section~\ref{sec:transfo}. This can obviously be done polynomial time.
    Using Lemma~\ref{lem:interval_coloring}, deciding the existence of a feasible assignment for $\calI'$, and computing the first week of such an assignment, if it exists, can be done in polynomial time. If no feasible assignment exists for $\calI'$, then using Lemma~\ref{lem:transfo_feasible}, there is no feasible assignment for $\calI$, and we are done. 
    Thus, suppose from now on the existence of a feasible assignment $f$ for $\calI'$. Perform merge operations as long as possible starting from $f$. Notice that these merge operations are only performed at times in $(1,2]$, hence only the first week of $f$ is needed. An example is given in Figure~\ref{fig:algo}.
    Each merge operation is polynomial according to Lemma~\ref{lem:merge_polynomial} (the time $t$ can in fact be chosen in $\{e_1+1,\ldots,e_{n'}+1\}$) and the number of merge operations is upper bounded by $q$ according to Lemma~\ref{lem:variant}. Hence, in polynomial time, we can get the first week of an assignment $f^\star$ which cannot be subject to a merge operation. 
    Consider the arcs in $D^{\calI',\overline\calF_{\calI'}}$ labeled with the representative of $f^\star$ in $\overline \calF_{\calI'}$. (To get these arcs, only the first week of $f^\star$ is needed.)
    
    Suppose first that they form a Hamiltonian cycle. Following the same lines as in the the proof of Lemma~\ref{lem:balanced_q}, this provides us with a periodic balanced feasible assignment for $\calI'$ with period $q$. An illustration is given in Figure~\ref{fig:hamiltonian_balanced}.

    Suppose now that they do not form a Hamiltonian cycle. Using Lemma~\ref{lem:big-cycle}, the graph $D^{\calI',\overline\calF_{\calI'}}$ is not Eulerian. 
    Since $\overline \calF_{\calI'}$ is a universal set of feasible assignments for $\calI'$, using Lemma~\ref{lem:connected_balanced}, there is no balanced feasible assignment for $\calI$.
\end{proof}
\begin{figure}
    \begin{subfigure}{\textwidth}
    \includegraphics[width = \textwidth]{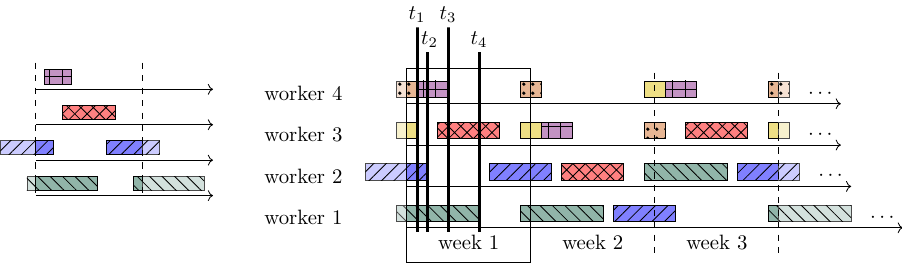}
    \caption{An illustration of an instance $\calI$ (left) with four workers and a feasible assignment $f$ for $\calI'$ (right). A series of merge operations can be performed: at time $t_1$ between the plain yellow (worker $3$) and dotted orange (worker $4$) tasks, then at time $t_2$ between the hatched blue (worker $2$) and plain yellow (worker $4$) tasks, then at time $t_3$ between workers $3$ and $4$, and finally at time $t_4$ between workers $1$ and $4$.}
    \end{subfigure}
    
    \begin{subfigure}{\textwidth}
        \includegraphics[width = \textwidth]{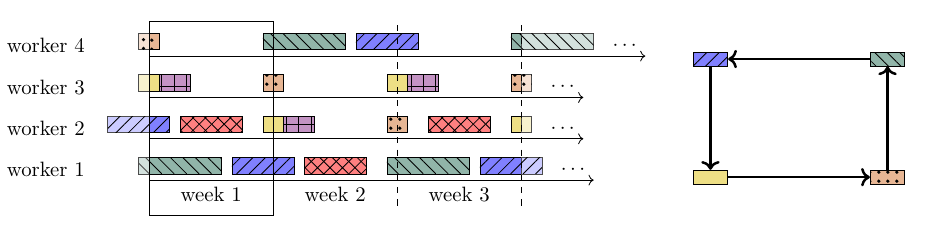}
    \caption{The assignment $f^\star$ resulting from the series of merge operations described above (at time $t_1$, $t_2$, $t_3$, and $t_4$), starting from $f$. The assignment $f^\star$ cannot be subject to a merge operation. The arcs labeled by $f^\star$ in $D^{\calI', \overline{\calF}_{\calI'}}$ form a Hamiltonian cycle.}
    \label{fig:algo_result}
    \end{subfigure}
    \caption{An illustration of a series of merge operations on a feasible assignment, until no merge operation can be performed anymore.}
    \label{fig:algo}
\end{figure}
\begin{figure}
    \centering
    \includegraphics[width = 0.8\textwidth]{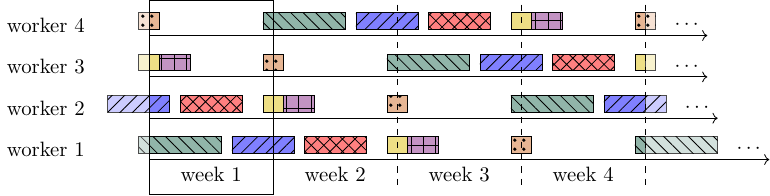}
    \caption{The assignment $f^\star$ of Figure~\ref{fig:algo_result} labels arcs of a Hamiltonian cycle in $D^{\calI', \overline\calF_{\calI'}}$. A periodic balanced feasible assignment with period $q$ can be built from the first week of assignment $f^\star$, as illustrated here.}
    \label{fig:hamiltonian_balanced}
\end{figure} 

\section{A problem of pebbles on an arc-colored Eulerian directed graph}\label{sec:pebbles}

This section introduces a problem of pebbles moving on an Eulerian directed graph, which we believe to be interesting for its own sake. The proof of Theorems~\ref{thm:periodic} and~\ref{thm:extension_cns} essentially consists in reducing the problem of existence of a balanced feasible assignment to this pebble problem. From now on, this section does not refer anymore to the question of assignments and periodic tasks.

Consider an arc-colored Eulerian directed multigraph $D=(V,A)$ such that each vertex is the head of an arc of every color, and also the tail of an arc of every color. (In other words, each color is a collection of vertex disjoint directed cycles covering the vertex set.) Assume we have a pebble on each vertex. We denote by $P$ the set of pebbles, and we have thus $|P| = |V|$.

Now, we explain how a sequence of colors induces a sequence of moves for the pebbles. Given a sequence $c_1,c_2,\ldots$ of colors, each pebble is first moved along the unique arc of color $c_1$ leaving the vertex on which it is originally located; then it is moved along the unique arc of color $c_2$ leaving the vertex it has reached after the first move; and so on. Remark that each move sends each pebble on a distinct vertex and so after each move, there is again a pebble on each vertex.

An illustration of such an arc-colored Eulerian directed multigraph and a move of the pebbles is given in Figure~\ref{fig:pebbles}.

\begin{figure}
    \centering
    \begin{subfigure}{0.45\textwidth}
        \centering
            \includegraphics[width = 0.9\textwidth]{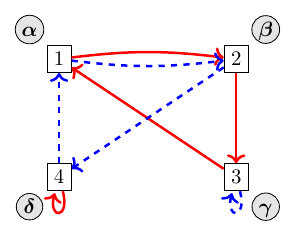}
        \caption{Graph $D$ with four vertices $V=\{1,2,3,4\}$ and pebbles $P = \{\boldsymbol{\alpha}, \boldsymbol{\beta}, \boldsymbol{\gamma}, \boldsymbol{\delta}\}$.}
    \end{subfigure}
    \hfill
    \begin{subfigure}{0.45\textwidth}
        \centering
            \includegraphics[width = 0.9\textwidth]{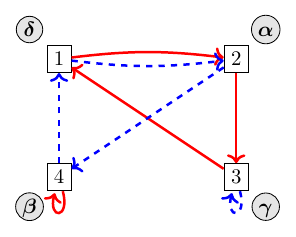}
        \caption{Pebbles on the graph $D$ after moving following the color dashed blue.}
    \end{subfigure}
    
    \caption{Example of an arc-colored Eulerian directed multigraph.}
    \label{fig:pebbles}
\end{figure}

We might ask under which condition there exists an infinite sequence of colors such that the arc visits are ``balanced,'' i.e., each pebble visits each arc with the same frequency. Not only such a sequence always exists but such a sequence can be chosen to be periodic.

\begin{proposition}\label{prop:graph}
There always exists a periodic sequence of colors making each pebble visit each arc with the same frequency $1/|A|$.
\end{proposition}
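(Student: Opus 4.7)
The plan is to recast the pebble dynamics group-theoretically. Writing $k$ for the number of colors, each color $c$ induces a permutation $\sigma_c$ of $V$ defined by letting $\sigma_c(v)$ be the head of the unique color-$c$ arc leaving $v$; applying color $c$ then moves every pebble according to $\sigma_c$. Let $G=\langle\sigma_1,\ldots,\sigma_k\rangle\leq\mathrm{Sym}(V)$. The key preliminary observation is that any directed path in $D$ can be followed by a single pebble simply by applying, in order, the colors of its successive arcs; hence the strong connectivity of the Eulerian multigraph $D$ lifts to transitivity of the $G$-action on $V$.

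Next I would introduce the ``colored Cayley'' digraph $H$ on vertex set $G$ with, for each $\pi\in G$ and each color $c$, an arc of color $c$ from $\pi$ to $\sigma_c\pi$. By construction $H$ has in-degree and out-degree exactly $k$ at every vertex, and $H$ is strongly connected because every element of $G$ is a product of the $\sigma_c$'s. Thus $H$ admits a closed Eulerian circuit starting and ending at the identity, which yields a sequence of colors $c_1,\ldots,c_N$ with $N=k|G|$; repeated indefinitely, this will be the desired periodic sequence.

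To verify balance, observe that during one period the pebble initially at $p$ traverses the arc $(u,v)$ of color $c$ in $D$ at step $i$ precisely when $c_i=c$ and the current state $\pi\in G$ (the product of the $\sigma$'s already applied, read off from the position in $H$) satisfies $\pi(p)=u$. Because the Eulerian circuit of $H$ uses each of its arcs exactly once, this count equals $|\{\pi\in G:\pi(p)=u\}|$. By the orbit--stabilizer theorem together with the transitivity of $G$ on $V$, this number is $|G|/|V|$, independent of $p$, $u$, $v$, and $c$. Hence in one period every pebble visits every arc exactly $|G|/|V|$ times, for a frequency of $(|G|/|V|)/(k|G|)=1/(k|V|)=1/|A|$, as required.

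The only non-routine step in this plan is the transitivity claim, which is where the Eulerianness of $D$ is used in an essential way; once it is in hand, the rest is a clean Eulerian-circuit plus orbit--stabilizer argument. The resulting period is $k|G|\leq k\cdot|V|!$, which is loose, but this is no issue since the proposition does not ask for a quantitative bound.
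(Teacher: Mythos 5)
Your proof is correct, and it follows the same broad strategy as the paper---lift the pebble dynamics to a state graph of configurations, extract an Eulerian circuit there, and repeat its color sequence periodically---but the key counting step is carried out quite differently. The paper builds the lifted graph $\widetilde D$ on \emph{all} bijections from pebbles to vertices, which may be disconnected, and its central Lemma~\ref{lem:factorial} shows by an explicit injection between components that every component contains the same number, $(|V|-1)!/\kappa$, of arcs projecting onto a given arc of $D$; an Eulerian circuit of one component then does the job. You instead work directly with the subgroup $G=\langle\sigma_1,\dots,\sigma_k\rangle$ and its colored Cayley digraph $H$, which is precisely (an isomorphic copy of) the component of $\widetilde D$ containing the initial configuration, and you obtain the uniform count $|\{\pi\in G:\pi(p)=u\}|=|G|/|V|$ from transitivity of the $G$-action (a clean consequence of strong connectivity of $D$) together with the orbit--stabilizer theorem. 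This sidesteps the paper's comparison-of-components argument entirely and makes the uniformity of the count structurally transparent; it also yields the period $k|G|$, which can be smaller than the paper's stated bound $|A|(|V|-1)!=k\,|V|!$ when $G$ is a proper subgroup of $\mathrm{Sym}(V)$ (the two coincide in the worst case). The minor points worth making explicit in a final write-up are that every element of the finite group $G$ is a \emph{positive} word in the generators (so $H$ is strongly connected), and that since the circuit is closed the configuration returns to the identity after each period, so the per-period counts indeed give the long-run frequencies; both are immediate.
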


The proof shows a bit more: each pebble actually follows a periodic walk on $D$ which has the same period as the sequence of colors, and the latter is upper bounded by $|A|(|V|-1)!$.

The proof of this proposition relies on a larger graph $\widetilde{D} = (\widetilde{V}, \widetilde{A})$ built as follows. The vertex set $\widetilde{V}$ is the set of bijections from $P$ to $V$. For every color $c$, define the permutation $\sigma_c$ of $V$ by setting $\sigma_c(i) = i'$ whenever there is an arc of color $c$ from $i$ to $i'$ in $D$. The set $\widetilde{A}$ is built as follows: for each bijection $\eta\colon P \to V$ and each color $c$, introduce an arc $(\eta, \sigma_c\circ \eta)$, and color this arc with color $c$. The indegree and outdegree of every vertex in $\widetilde{V}$ are equal to the number of colors.

For each pebble $j$, we introduce a function $p_j \colon \widetilde{A}\to A$. Given an arc $\tilde{a}=(\eta, \eta')$ of $\widetilde{A}$ with color $c$, we define $p_j(\tilde{a})$ as the arc $(\eta(j), \eta'(j))$ of $A$ with color $c$.

The graph $\widetilde{D}$ is an encoding of all possible distributions of the pebbles on $V$ and all possible transitions between these distributions. More precisely consider any initial distribution $\eta$ of the pebbles on $V$ and a sequence of colors $c_1,c_2,\ldots$. The moves induced by the sequence of colors translate into a walk on $\widetilde{D}$. The corresponding sequence of vertices of $\widetilde{D}$ is the sequence of distributions of the pebbles on $V$ induced by the sequence of colors.

\begin{lemma}\label{lem:factorial}
    Let $j\in P$ and $a\in A$. Denoting by $\kappa$ the number of connected components of $\widetilde{D}$, we have 
    \[
    |p_j^{-1}(a)\cap A(\widetilde{K})| = \frac{(|V|-1)!} \kappa
    \]
    for every connected component $\widetilde{K}$ of $\widetilde{D}$. In particular, the left-hand term is independent of $j$, $a$, and $\widetilde K$.
\end{lemma}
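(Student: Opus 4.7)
The plan is to describe the connected components of $\widetilde{D}$ explicitly via the subgroup $G$ of the symmetric group on $V$ generated by the color permutations $\{\sigma_c\}$, and then to count $|p_j^{-1}(a) \cap A(\widetilde{K})|$ by an orbit--stabilizer argument. Since $D$ is strongly connected (it is Eulerian), from every $w \in V$ every other $u \in V$ is reachable by a sequence of colored arcs, so $G$ acts transitively on $V$.

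First, I would show that two bijections $\eta_1, \eta_2 \in \widetilde{V}$ lie in the same component of $\widetilde{D}$ if and only if $\eta_2 \circ \eta_1^{-1} \in G$. A directed walk from $\eta_1$ using colors $c_1, \ldots, c_k$ ends at $(\sigma_{c_k} \circ \cdots \circ \sigma_{c_1}) \circ \eta_1$, which gives the forward implication. For the converse, since $G$ is finite each element is already a product of the $\sigma_c$'s without inverses, so $g \circ \eta_1$ is reachable from $\eta_1$ by a directed walk; weak and strong connectivity coincide in $\widetilde{D}$ because it is regular (indegree equal to outdegree at every vertex), hence each weakly connected component is Eulerian. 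It follows that every component has exactly $|G|$ vertices, and in particular $\kappa = |V|!/|G|$.

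Next, I would count the intersection. Fix $j \in P$ and an arc $a = (u,v) \in A$ of color $c$. By definition, $p_j^{-1}(a) = \{(\eta, \sigma_c \circ \eta) : \eta(j) = u\}$, and such an arc lies in $A(\widetilde{K})$ exactly when $\eta \in V(\widetilde{K})$. Choosing $\eta_0 \in V(\widetilde{K})$ and setting $w = \eta_0(j)$, the previous step gives $V(\widetilde{K}) = \{g \circ \eta_0 : g \in G\}$, and therefore
$$|p_j^{-1}(a) \cap A(\widetilde{K})| = |\{g \in G : g(w) = u\}|.$$
By transitivity of $G$ on $V$ and the orbit--stabilizer theorem, the right-hand side equals $|G|/|V| = (|V|-1)!/\kappa$, which is the claimed value (independent of $j$, $a$, and $\widetilde{K}$).

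The only genuinely conceptual step is identifying the components of $\widetilde{D}$ with the left $G$-orbits on $\widetilde{V}$; once that is established the count is a textbook application of orbit--stabilizer. I expect no real obstacle beyond articulating cleanly the passage from weak connectivity to directed reachability via the regularity of $\widetilde{D}$ and the finiteness of $G$.
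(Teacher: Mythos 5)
Your proof is correct, and it takes a more algebraic route than the paper's. The paper argues directly on the components: it first lifts a walk of $D$ to show that every component of $\widetilde{D}$ meets $p_j^{-1}(a)$, then compares two components $\widetilde{K}_1,\widetilde{K}_2$ via the injection $\tilde a \mapsto (\eta\circ\eta_1^{-1}\circ\eta_2,\;\sigma_d\circ\eta\circ\eta_1^{-1}\circ\eta_2)$ to conclude that $|p_j^{-1}(a)\cap A(\widetilde{K})|$ is the same for every $\widetilde{K}$, and finally divides the total count $|p_j^{-1}(a)|=(|V|-1)!$ by $\kappa$. Your version makes the group $G=\langle \sigma_c\rangle$ explicit, identifies the components with the orbits $\{g\circ\eta_0: g\in G\}$ of left multiplication (so $\kappa=|V|!/|G|$), and then counts $|\{g\in G: g(w)=u\}|=|G|/|V|$ by transitivity and orbit--stabilizer; the paper's right-composition map $\eta\mapsto\eta\circ\eta_1^{-1}\circ\eta_2$ is precisely your coset bijection in disguise. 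What your approach buys is a cleaner structural statement (components are $G$-cosets, which also yields $\kappa=|V|!/|G|$ as a by-product) and a one-line count; what it costs is the need to justify two small points you do address correctly, namely that weak connectivity gives directed reachability (equal in- and outdegrees, or equivalently $\sigma_c^{-1}\in G$) and that finiteness of $G$ lets you write every element as a positive word in the $\sigma_c$'s. Both arguments are complete; yours is arguably the more transparent of the two.
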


Note that weakly and strongly connected components of $\widetilde{D}$ are identical by equality of the in- and outdegrees.

\begin{proof}[Proof of Lemma~\ref{lem:factorial}]
    Denote by $c$ the color of $a$.
    
    We prove first that every connected component $\widetilde{K}$ of $\widetilde{D}$ contains at least one arc from $p_j^{-1}(a)$. Let $\eta$ be a vertex of such a connected component $\widetilde{K}$. Consider any walk $W$ in $D$ from $\eta(j)$ to the tail of $a$, and then traversing $a$. Such a walk exists because $D$ is strongly connected. With $c_1,c_2,\ldots,c$ being the sequence of colors of the arcs traversed by the walk, the sequence 
    \[
\eta,\sigma_{c_1}\circ\eta,\sigma_{c_2}\circ\sigma_{c_1}\circ\eta,\ldots,\sigma_c\circ\cdots\circ\sigma_{c_2}\circ\sigma_{c_1}\circ\eta
\]
forms a walk in $\widetilde{K}$ starting from $\eta$, whose image by $p_j$ is $W$. Hence, $\widetilde{K}$ contains at least one arc from $p_j^{-1}(a)$.

Second, given two components $\widetilde{K}_1$ and $\widetilde{K}_2$ of $\widetilde{D}$, we build an injective map $\psi \colon A(\widetilde{K}_1) \to \widetilde A$ as follows. Pick $\tilde a_1 \in p_j^{-1}(a) \cap A(\widetilde{K}_1)$ and $\tilde a_2 \in p_j^{-1}(a) \cap A(\widetilde{K}_2)$. According to what we have just proved, these two arcs exist. Write $\tilde a_1 = (\eta_1,\sigma_c \circ \eta_1)$ and $\tilde a_2 = (\eta_2,\sigma_c \circ \eta_2)$. Then, for an arc $\tilde a \in A(\widetilde{K}_1)$ with tail vertex $\eta$ and color $d$, set $\psi(\tilde a)$ as the arc $(\eta \circ \eta_1^{-1} \circ \eta_2, \sigma_d \circ \eta \circ \eta_1^{-1} \circ \eta_2)$ with color $d$ (this arc is unique). Checking that $\psi$ is injective is immediate.

Third, we check that $\psi$ maps elements from $p_j^{-1}(a) \cap A(\widetilde{K}_1)$ to $p_j^{-1}(a) \cap A(\widetilde{K}_2)$. Let $\tilde a$ be an arc in $p_j^{-1}(a) \cap A(\widetilde{K}_1)$. It is of the form $(\eta,\sigma_c\circ\eta)$. Its image by $\psi$ is the arc $(\eta \circ \eta_1^{-1} \circ \eta_2, \sigma_c \circ \eta \circ \eta_1^{-1} \circ \eta_2)$ with color $c$. Denoting $i$ the tail of $a$, we have $\eta(j) = \eta_1(j) = \eta_2(j) = i$, which implies immediately that $p_j\bigl(\psi(\tilde a)\bigl)$ has the same endpoints as $a$. Since it has also the same color $c$, we have $p_j\bigl(\psi(\tilde a)\bigl) = a$.

From the previous two paragraphs, we see that for any two components $\widetilde{K}_1$ and $\widetilde{K}_2$ of $\widetilde{D}$, we have $|p_j^{-1}(a) \cap A(\widetilde{K}_1)| \leq |p_j^{-1}(a) \cap A(\widetilde{K}_2)|$.
Since the choices of $\widetilde{K}_1$ and $\widetilde{K}_2$ can be arbitrary, we have actually 
\begin{equation}\label{eq:equal}
|p_j^{-1}(a) \cap A(\widetilde{K}_1)| = |p_j^{-1}(a) \cap A(\widetilde{K}_2)|\, .
\end{equation}

Finally, an arc $\tilde a = (\eta,\eta')$ is mapped to $a$ by $p_j$ precisely when $\tilde a$ is colored with color $c$ and $\eta(j) = i$ (where $i$ is the tail of $a$). The number of bijections $\eta$ from $P$ to $V$ with $\eta(j)=i$ is $(|V|-1)!$. Hence, $|p_j^{-1}(a)| = (|V|-1)!$. Combining this with equality~\eqref{eq:equal}, we get the desired conclusion.
\end{proof}

\begin{proof}[Proof of Proposition~\ref{prop:graph}]
    Choose any connected component $\widetilde{K}$ of $\widetilde{D}$. It is Eulerian, since each vertex of $\widetilde{D}$ has equal in- and outdegrees. Consider an arbitrary Eulerian cycle, and denote by $c_1,c_2,\ldots$ the sequence of colors of the arcs of this cycle. According to Lemma~\ref{lem:factorial}, every pebble $j$ moved according to this sequence of colors follows a closed walk on $D$ visiting each arc $\frac{(|V|-1)!} \kappa$ times. Repeating infinitely many times this sequence of colors provides the desired periodic sequence.
\end{proof}

\section{Proofs of Theorems~\texorpdfstring{\ref{thm:periodic}}{1.3} and~\texorpdfstring{\ref{thm:extension_cns}}{1.4}}
The following proofs and results are established for the extended version. They are also valid for the basic version, as it is a special case of the extended version, even if Theorems~\ref{thm:cns} and \ref{thm:algo} provide us with stronger results.
\begin{lemma}\label{lem:eulerian_periodic}
    Let $\calI$ be an instance such that $|U(\calI)|=q$ and $\calF$ be an arbitrary set of feasible assignments.
    If $D^{\calI, \calF}$ is Eulerian, then there exists a periodic balanced feasible assignment for $\calI$ with period bounded by $|\calF|q!$.
\end{lemma}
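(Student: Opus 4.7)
The plan is to apply Proposition~\ref{prop:graph} to the multigraph $D^{\calI,\calF}$, viewed as an arc-colored digraph with color set $\calF$. Since each color $f\in\calF$ labels a collection of $q$ vertex-disjoint directed cycles covering $U(\calI)$, and since $D^{\calI,\calF}$ is Eulerian by assumption, this graph fits exactly the setup of Section~\ref{sec:pebbles}. The proposition yields a periodic sequence $f_1,f_2,\ldots$ of elements of $\calF$ under which each pebble visits each arc with frequency $1/(q|\calF|)$. Inspecting the proof of the proposition, the period $T$ can be taken to equal the number of arcs in a connected component of the auxiliary graph $\widetilde{D}$, hence $T \leq |A(\widetilde D)| = q!\,|\calF|$, using that $|V(\widetilde D)| = q!$ and each vertex of $\widetilde D$ has outdegree $|\calF|$.

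From this sequence I would build $g$ via the construction of Section~\ref{sec:feasible_permutation}: pick an initial permutation $\pi_1$ (to be chosen conveniently in a moment), define $\pi_r$ by the recursion $\pi_{r+1}=\pi_r\circ\varphi_{f_r,2}\circ\varphi_{f_{r+1},1}^{-1}$, and set $g(i,r)\coloneqq(\pi_r\circ f_r)(i,1)$. Lemma~\ref{lem:seq} guarantees that $g$ is feasible. To analyze $g$, I would introduce the bijection $\eta_r\colon[q]\to U(\calI)$ defined by $\eta_r(j)\coloneqq\varphi_{f_r,1}^{-1}(\pi_r^{-1}(j))$, so that for $i\in U(\calI)$ one has $g(i,r)=j$ if and only if $\eta_r(j)=i$. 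A short computation using $\sigma_{f_r}=\varphi_{f_r,2}^{-1}\circ\varphi_{f_r,1}$ and the recursion above gives $\eta_{r+1}=\sigma_{f_r}\circ\eta_r$, which is exactly the pebble dynamics of Section~\ref{sec:pebbles} with initial configuration $\eta_1$.

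Choosing $\pi_1$ so that $\eta_1$ coincides with the starting vertex of the Eulerian circuit producing $(f_r)$ (which is possible as $\pi_1 = \eta_1^{-1}\circ\varphi_{f_1,1}^{-1}$ works), we obtain $\eta_{T+1}=\eta_1$ and $f_{T+1}=f_1$; together these imply $\pi_{T+1}=\pi_1$ and thus $g(\cdot,r+T)=g(\cdot,r)$ for all $r$, so $g$ is periodic of period dividing $T\leq q!\,|\calF|$. For balancedness, fix a task $i$ and a worker $j$. If $i\in U(\calI)$, the condition $g(i,r)=j$ reads $\eta_r(j)=i$, i.e., pebble $j$ sits on vertex $i$ at step $r$; since by Lemma~\ref{lem:factorial} pebble $j$ traverses each arc of $D^{\calI,\calF}$ equally often during one full period, and $i$ has in-degree $|\calF|$ among the total $q|\calF|$ arcs, this happens for exactly a $1/q$ fraction of the $T$ steps. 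If $i\notin U(\calI)$, the identity $g(i,r)=\pi_r(f_r(i,1))=\pi_r(f_r(u_{f_r,1}(i),1))=g(u_{f_r,1}(i),r)$ shows that $g(i,r)=j$ if and only if pebble $j$ traverses the (color-dependent) arc of color $f_r$ with tail $u_{f_r,1}(i)$ at step $r$; grouping the $T$ steps by color $c\in\calF$ and invoking Lemma~\ref{lem:factorial} once more on each such arc yields the same frequency $1/q$.

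The main obstacle I expect is keeping the translation between the pebble framework and the assignment framework fully rigorous: on the one hand, verifying $\eta_{r+1}=\sigma_{f_r}\circ\eta_r$ from the recursion defining $\pi_r$ and checking that $\pi_1$ can indeed be chosen to place $\eta_1$ at the desired starting vertex of the Eulerian circuit; and above all, handling the tasks $i\notin U(\calI)$ in the balance computation, where the arc of $D^{\calI,\calF}$ relevant to task $i$ varies with the color $f_r$ and must be tracked through $u_{f_r,1}(i)$ before Lemma~\ref{lem:factorial} can be applied.
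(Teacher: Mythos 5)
Your proposal is correct and follows essentially the same route as the paper: apply Proposition~\ref{prop:graph} to $D^{\calI,\calF}$ with $\calF$ as the color set, build $g$ through the $\pi_r$ recursion and Lemma~\ref{lem:seq}, identify $g$ restricted to $U(\calI)$ with the pebble dynamics, and reduce tasks $i\notin U(\calI)$ to $u_{f_r,1}(i)$ before invoking Lemma~\ref{lem:factorial}. The only (cosmetic) difference is that you establish the pebble--assignment correspondence via the explicit formula $\eta_r=\varphi_{f_r,1}^{-1}\circ\pi_r^{-1}$ rather than by the paper's induction on $r$.
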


\begin{proof}
    Suppose that $D^{\calI, \calF}$ is Eulerian. Locate one pebble on each vertex of $D^{\calI,\calF}$. Applying Proposition~\ref{prop:graph} on $D^{\calI,\calF}$, with each feasible assignment in $D^{\calI,\calF}$ identified with a color, we get a periodic sequence $f_1,f_2,\ldots$ of feasible assignments. Denote by $g$ the resulting feasible assignment given by Lemma~\ref{lem:seq} for this sequence, with $\pi_1$ being an arbitrary permutation.

    Number $j = g(i,1)$ the pebble initially located on vertex $i\in U(\calI)$. This makes sure that each pebble gets a distinct number in $[q]$ (by the bijectivity of $\varphi_{g,1}$). We establish now the following claim: {\em For every $i\in U(\calI)$ and every $j\in [q]$, pebble $j$ is on vertex $i$ after its $r$th move if and only if $g(i,r+1)=j$.} 

    Let us proceed by induction on $r\in \Z_{\geq0}$. This is true for $r=0$ by the definition of the numbering of the pebbles. Suppose now that the claim if true for some $r\in\Z_{\geq 0}$. Consider pebble $j$ and assume it is located on $i$ after its $r+1$th move. This means that the pebble $j$ was on vertex $i'$ after its $r$th move then moved along the arc from $i'$ to $i$ labeled with $f_{r+1}$. Then, using equation~\eqref{eq:pi} and the fact that $f_{r+1}(i,2) = f_{r+1}(i',1)$ by definition of $D^{\calI, \calF}$,
    \begin{equation}\label{eq:pebble}
    g(i,r+2) = \pi_{r+2}(f_{r+2}(i,1)) = \pi_{r+1}(f_{r+1}(i,2)) = \pi_{r+1}(f_{r+1}(i',1)) = g(i',r+1)=j\, ,
    \end{equation}
    as desired. 
    Conversely, assume that $g(i,r+2) = j$. Denote by $i'$ the tail of the arc of head $i$ and label $f_{r+1}$. Then equation~\eqref{eq:pebble} holds as well, meaning that $g(i',r+1)=j$. By induction, the pebble $j$ was located on vertex $i'$ after its $r$th move. It then moves along the arc from $i'$ to $i$ with label $f_{r+1}$, which concludes the proof of the claim.
    
    We check that $g$ is balanced. According to Proposition~\ref{prop:graph}, for every $i \in U(\calI)$, every $j \in [q]$, and every $f \in \calF$, we have
    \[
    \lim_{t \to +\infty}\frac 1 t\big|\{r \in [t]\colon \text{pebble }j\text{ arrives at }i\text{ along arc labeled with } f \text{ for its $r$th move}\}\big| = \frac 1 {|A^{\calF}|}\, .
    \]

    With the claim, this equality becomes
    \[
    \lim_{t \to +\infty}\frac 1 t\big|\{r \in [t] \colon f_{r} = f \text{ and } g(i,r+1)=j \}\big| = \frac 1 {|A^{\calF}|}\, .
    \]
    This equality is actually also true when $U(\calI)$ is replaced by the larger set $[n]$. Indeed, we have $g(i,r+1)=g(u_{f_{r+1},1}(i),r+1)$ for every $r \in \Z_{\geq 0}$, by definition of $g$ and $u_{f_{r+1},1}(i)$ (the latter being defined in Section~\ref{sec:thm1}). 
    
    Therefore, for every $i\in [n]$ and every $j\in [q]$, we have
    \begin{align*}
    \lim_{t\rightarrow +\infty} \frac{1}{t}\big|\{r \in [t] \colon g(i,r+1) = j\}\big| &= \sum_{f\in \calF}\lim_{t \to +\infty}\frac 1 t\big|\{r \in [t] \colon f_{r} = f \text{ and } g(i,r+1)=j \}\big| \\ 
    &= \frac{|\calF|}{|A^{\calF}|} = \frac{1}{q} \, .
    \end{align*} 
    This leads to the desired relation $\lim_{t\rightarrow +\infty} \frac{1}{t}\big|\{r \in [t] \colon g(i,r) = j\}\big|$ since the element $r=1$ does not count asymptotically.

    We check that $g$ is periodic with period upper bounded by $|\calF|q!$.
    According to the comment following Proposition~\ref{prop:graph}, each pebble follows a periodic walk. Denote by $h$ the period of this walk. Still according to the same comment, we have $h\leq |A^\calF|(q-1)! = |\calF|q!$ and that the sequence of the $f_r$ is periodic with period $h$. With the claim proven above, for each $i\in U(\calI)$ and $r\in \Z_{>0}$, we thus have $g(i,r)=g(i,r+h)$. This relation is true for all $i\in[n]$ because
    
    \begin{align*}
        g(i,r)&=\pi_r(f_r(i,1)) = \pi_r(f_r(u_{f_r,1}(i),1)) = g(u_{f_r,1}(i),r) = g(u_{f_r,1}(i),r+h) \\&= \pi_{r+h}(f_{r+h}(u_{f_{r},1}(i),1)) = \pi_{r+h}(f_{r+h}(u_{f_{r+h},1}(i),1)) = \pi_{r+h}(f_{r+h}(i,1)) = g(i,r+h)\, .
    \end{align*}
    Therefore, the assignment $g$ is periodic with period $h$.
\end{proof}

\begin{proof}[Proof of Theorem~\ref{thm:periodic}]
    Denote by $\calI'$ the new instance defined from $\calI$ as in Section~\ref{sec:transfo} and take $\calF_{\calI'}$ a universal set of feasible assignments for $\calI'$ with size bounded by $q^2$. Such a set exists, because there are at most $q^2$ pairs of elements in $U(\calI')$.
    Suppose there exists a balanced feasible assignment for $\calI$. According to Lemma~\ref{lem:connected_balanced}, the graph $D^{\calI',\calF_{\calI'}}$ is Eulerian. According to Lemma~\ref{lem:eulerian_periodic}, there exists a periodic balanced feasible assignment for $\calI'$ with period bounded by ${|\calF_{\calI'}| q!\leq} q^2q!$. Therefore there exists such an assignment for $\calI$ (by Lemma~\ref{lem:transfo_balanced}).
\end{proof}

\begin{proof}[Proof of Theorem~\ref{thm:extension_cns}]
    As discussed in Section~\ref{sec:extension}, we just have to prove the implication $\ref{ext_cond1}\Rightarrow\ref{ext_cond2}$.
    Suppose there exists a feasible assignment for $\calI$ with a worker performing each task at least once.
    Take an arbitrary universal set $\calF$ of feasible assignments. According to Lemma~\ref{lem:induced_walk}, the graph $D^{\calI,\calF}$ is weakly connected, thus Eulerian. According to Lemma~\ref{lem:eulerian_periodic}, there exists a balanced feasible assignment for $\calI$.
\end{proof}

\section{Concluding remarks}
\subsection{All feasible assignments are balanced. (Almost.)}\label{subsec:all}

If we are just interested in the existence of a balanced feasible assignment, and not on the periodicity of such an assignment or its computability, we can replace Proposition~\ref{prop:graph} by the following lemma in the proof of Theorem~\ref{thm:extension_cns}. We keep the same setting of an arc-colored Eulerian directed multi-graph $D=(V,A)$ with a distribution of pebbles on its vertices, as in the beginning of Section~\ref{sec:pebbles}.

\begin{lemma}\label{lem:graph-bis}
    Consider an infinite sequence of independent random colors drawn uniformly. Then, almost surely, this sequence makes each pebble visit each arc with the same frequency. 
\end{lemma}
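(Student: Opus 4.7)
The plan is to model each individual pebble's trajectory as a finite irreducible Markov chain on $V$ and to invoke the law of large numbers for such chains. Let $|C|$ denote the number of colors; by assumption, every vertex of $D$ has indegree and outdegree both equal to $|C|$.

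First I would fix a pebble $j\in P$ and observe that, as the sequence $c_1,c_2,\ldots$ of i.i.d.\ uniform random colors is drawn, the successive positions $X^{(j)}_0,X^{(j)}_1,\ldots$ of pebble $j$ on $V$ form a Markov chain with transitions $P(v,v')=(1/|C|)\big|\{c\in C\colon \sigma_c(v)=v'\}\big|$. (The different pebbles are coupled through their shared colors, but each has this same marginal law.) I would then check two properties. On the one hand, this chain is irreducible: $D$ is weakly connected with balanced in- and out-degrees, hence strongly connected, and any directed path from $u$ to $v$ in $D$ is realized by the chain with positive probability by choosing the colors of its arcs in order. On the other hand, the uniform distribution $\pi\equiv 1/|V|$ is stationary, since each vertex admits exactly one incoming arc of each color:
\[
\sum_{u\in V}\pi(u)P(u,v)=\frac{1}{|V|\,|C|}\big|\{(u,c)\in V\times C\colon\sigma_c(u)=v\}\big|=\frac{|C|}{|V|\,|C|}=\pi(v).
\]

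Next I would lift the single-pebble chain to the pair chain $Y_r\coloneqq(X^{(j)}_r,c_{r+1})$ on $V\times C$, whose transitions send $(v,c)$ to $(\sigma_c(v),c')$ with $c'$ uniform on $C$ and independent of the past. This pair chain inherits irreducibility from its $V$-coordinate, and its stationary distribution is $\pi\otimes\mathrm{Unif}(C)$, which assigns mass $1/(|V|\,|C|)=1/|A|$ to every state. Applying the Birkhoff ergodic theorem for finite irreducible Markov chains gives, almost surely, for every $v\in V$ and $c\in C$,
\[
\lim_{t\to+\infty}\frac 1 t\big|\{r\in[t]\colon X^{(j)}_{r-1}=v,\, c_r=c\}\big|=\frac 1{|A|}.
\]
Since pebble $j$ traverses the arc $a=(v,v')$ of color $c$ at step $r$ exactly when $X^{(j)}_{r-1}=v$ and $c_r=c$, the empirical frequency of arc $a$ along the trajectory of $j$ converges almost surely to $1/|A|$.

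Finally I would intersect the corresponding almost sure events over the finite set $P\times A$ of pebble-arc pairs, which preserves full probability and concludes the proof. The only mild obstacle is to recognize that aperiodicity plays no role here: one uses only the empirical-average form of the ergodic theorem, which holds for every finite irreducible Markov chain regardless of period.
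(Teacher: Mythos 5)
Your argument is correct and essentially the same as the paper's: under the bijection sending a pair (vertex, color) to the unique arc of that color leaving that vertex, your lifted chain on $V\times C$ is exactly the arc-valued Markov chain the paper studies, with the same verification that the uniform distribution (mass $1/|A|$ per arc) is stationary and the same appeal to the ergodic theorem for irreducible finite chains, concluded by intersecting the finitely many almost-sure events over pebbles and arcs. The only cosmetic difference is your preliminary detour through the vertex chain before lifting, which the paper skips by working with arcs directly.
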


In particular there are infinitely many color sequences making each pebble visit each arc with the same frequency. The proof relies on basic properties of Markov chains. (A standard reference on Markov chains is the book by Norris~\cite{norris1998markov}.) The proof does not show how to construct such a sequence of colors. It is not even clear that the proof could be modified in that regard. So the proof shows that almost all color sequences have the desired property, but does not explain how to construct a single such sequence. Although this might sound surprising, this phenomenon is quite common. {\em Normal} numbers form an example: (almost) all numbers are normal but not a single one has been described explicitely~\cite{nomal}.

The proof of Proposition~\ref{prop:graph} actually provides an alternative proof of the existence of sequences of colors making each pebble visit each arc with the same frequency, with an explicit construction. However, the latter proof does not show that almost all sequences are actually like that. (In counterpart, it shows that such a sequence can be chosen to be periodic.)

\begin{proof}[Proof of Lemma~\ref{lem:graph-bis}]
  Any realization of this random sequence of colors defines a sequence of moves for the pebbles, as described above. Consider an arbitrary pebble. The random sequence of colors translates thus into a random walk of the pebble on the graph $D$. Denote by $X_k$ the arc along which the pebble performs its $k$th move. The $X_k$'s form a finite Markov chain. Since the graph is Eulerian, this Markov chain is irreducible, and hence there exists a unique invariant distribution $\llambda$ such that $\llambda^{\top} = \llambda^{\top} M$, where $M$ is the transition matrix of the Markov chain.

    We claim that $\llambda$ is actually the vector $\frac 1 {|A|} \un$, where $\un$ is the all-one vector. By the uniqueness of the invariant distribution, it is enough to check that $\un$ is a left eigenvector of $M$ with eigenvalue equal to $1$. The entry $M_{a,a'}$ of the transition matrix (row $a$, column $a'$), which corresponds to the probability of moving along $a'$ just after moving along $a$, is equal to $1/\ell$ if the head of $a$ is the tail of $a'$, and $0$ otherwise. The indegree of each vertex being $\ell$, we have 
    \[
    \sum_{a \in A}M_{a,a'} = \ell \frac 1 \ell = 1 \, ,
    \]
    and therefore $\un^{\top} M = \un^{\top}$.

According to the ergodic theorem, for almost all realizations of the random sequence of colors, the pebble visits any arc $a$ with a frequency equal to the corresponding entry in $\llambda$, which is equal to $\frac 1 {|A|}$ for all arcs since it is a probability distribution proportional to the all-one vector. The previous discussion does not depend on the considered pebble, which implies the desired result: for almost all realizations of the random sequence of colors, every pebble visits every arc with a frequency equal to $\frac 1 {|A|}$.
\end{proof}


\subsection{Minimizing the number of workers}
A follow-up problem could be finding the minimum number of workers such that there exists a balanced feasible assignment (which is actually equivalent to the existence of a periodic balanced feasible assignment according to Theorem~\ref{thm:periodic}). Within the setting of the basic version, namely without schedules, the problem is polynomial thanks to Theorem~\ref{thm:algo} and Lemma~\ref{lem:polynomial_bigq}. Indeed, we can use Theorem~\ref{thm:algo} for each value of $q$ within $[0,2n]$, the bound $2n$ being given by Lemma~\ref{lem:polynomial_bigq}.
(The existence of a balanced feasible assignment with $q$ workers implies the existence of such an assignment for $q+1$ workers: indeed, there is still a worker performing each task at least once and Theorem~\ref{thm:cns} applies; thus the complexity of this algorithm can actually be improved by performing a binary search.)

\subsection{Open question}
For the extended version, we do not know whether there is a set of valid schedules for which the bound on the period of a periodic balanced feasible assignment obtained in Theorem~\ref{thm:periodic} is tight.

\bibliographystyle{amsplain}
\bibliography{bibliography}

\end{document}